\titleformat*{\section}{\large\bfseries}
\newtheorem{theorem}{Theorem}[section]
\newtheorem*{theorem*}{Theorem}
\newtheorem{proposition}[theorem]{Proposition}
\newtheorem{lemma}[theorem]{Lemma}
\newtheorem{corollary}[theorem]{Corollary}
\newtheorem{definition}[theorem]{Definition}
\newtheorem*{problem*}{Problem}
\newtheorem{remark}[theorem]{Remark}
\DeclarePairedDelimiter\floor{\lfloor}{\rfloor}
\DeclareMathAccent{\wtilde}{\mathord}{largesymbols}{"65}
\DeclareMathOperator{\supp}{\text{supp}}
\newcommand{\C}{\mathbb{C}}
\newcommand{\N}{\mathbb{N}}
\newcommand{\GHZ}{\mathrm{GHZ}}
\newcommand{\R}{\mathbb{R}}
\newcommand{\cB}{\mathcal{B}}
\newcommand{\cH}{\mathcal{H}}
\newcommand{\cX}{\mathcal{X}}
\newcommand{\cY}{\mathcal{Y}}
\newcommand{\cS}{\mathcal{S}}
\newcommand{\lv}{\lvert}
\newcommand{\rv}{\rvert}
\newcommand{\lV}{\lVert}
\newcommand{\rV}{\rVert}
\newcommand{\ra}{\rangle}
\newcommand{\la}{\langle}
\DeclareMathOperator{\Hom}{Hom}
\DeclareMathOperator{\Diag}{Diag}
\DeclareMathOperator{\Tr}{Tr}
\DeclareMathOperator{\spank}{span}
\newcommand{\loccto}{\xrightarrow{\textnormal{LOCC}}}
\newcommand{\ketbra}[2]{\left|#1\middle\rangle\!\middle\langle#2\right|}
\newcommand{\bipartite}[1]{\psi_{#1}}
\LetLtxMacro\orgvdots\vdots
\LetLtxMacro\orgddots\ddots
\DeclareRobustCommand\vdots{%
  \mathpalette\@vdots{}%
}
\newcommand*{\@vdots}[2]{%
  \sbox0{$#1\cdotp\cdotp\cdotp\m@th$}%
  \sbox2{$#1.\m@th$}%
  \vbox{%
    \dimen@=\wd0 %
    \advance\dimen@ -3\ht2 %
    \kern.5\dimen@
    \dimen@=\wd2 %
    \advance\dimen@ -\ht2 %
    \dimen2=\wd0 %
    \advance\dimen2 -\dimen@
    \vbox to \dimen2{%
      \offinterlineskip
      \copy2 \vfill\copy2 \vfill\copy2 %
    }%
  }%
}
\DeclareRobustCommand\ddots{%
  \mathinner{%
    \mathpalette\@ddots{}%
    \mkern\thinmuskip
  }%
}
\newcommand*{\@ddots}[2]{%
  \sbox0{$#1\cdotp\cdotp\cdotp\m@th$}%
  \sbox2{$#1.\m@th$}%
  \vbox{%
    \dimen@=\wd0 %
    \advance\dimen@ -3\ht2 %
    \kern.5\dimen@
    \dimen@=\wd2 %
    \advance\dimen@ -\ht2 %
    \dimen2=\wd0 %
    \advance\dimen2 -\dimen@
    \vbox to \dimen2{%
      \offinterlineskip
      \hbox{$#1\mathpunct{.}\m@th$}%
      \vfill
      \hbox{$#1\mathpunct{\kern\wd2}\mathpunct{.}\m@th$}%
      \vfill
      \hbox{$#1\mathpunct{\kern\wd2}\mathpunct{\kern\wd2}\mathpunct{.}\m@th$}%
    }%
  }%
}
\author[1]{Asger Kj\ae rulff Jensen}
\author[2,1]{P\'eter Vrana}
\affil[1]{QMATH, Department of Mathematical Sciences, University of Copenhagen, Universitetsparken 5, 2100 Copenhagen, Denmark} \affil[2]{Department of Geometry, Budapest University of Technology and Economics, Egry J\'ozsef u. 1., 1111 Budapest, Hungary}
\title{The asymptotic spectrum of LOCC transformations}
\begin{document}
\maketitle

\begin{abstract}
We study exact, non-deterministic conversion of multipartite pure quantum states into one-another via local operations and classical communication (LOCC) and asymptotic entanglement transformation under such channels. In particular, we consider the maximal number of copies of any given target state that can be extracted exactly from many copies of any given initial state as a function of the exponential decay in success probability, known as the converese error exponent. We give a formula for the optimal rate presented as an infimum over the asymptotic spectrum of LOCC conversion. A full understanding of exact asymptotic extraction rates between pure states in the converse regime thus depends on a full understanding of this spectrum. We present a characterisation of spectral points and use it to describe the spectrum in the bipartite case. This leads to a full description of the spectrum and thus an explicit formula for the asymptotic extraction rate between pure bipartite states, given a converse error exponent. This extends the result on entanglement concentration in \cite{MR1960075}, where the target state is fixed as the Bell state. In the limit of vanishing converse error exponent the rate formula provides an upper bound on the exact asymptotic extraction rate between two states, when the probability of success goes to 1. In the bipartite case we prove that this bound holds with equality.
\end{abstract}
\section{Introduction}
The primary objects of study in this paper are entangled $k$-partite pure states of finite dimensional quantum systems, represented by vectors in the tensor product of $k$ finite dimensional Hilbert spaces. Allowing for certain quantum operations of these systems yields a resource theory, viewing quantum states as resources and quantum operations as methods of extracting one resource from another. One important set of quantum operations are LOCC channels, which allow for local application of completely positive maps and the sending of classical information between parties. For any two resources $\ket{\psi}$ and $\ket{\phi}$, one can ask the question: Can $\ket{\psi}$ be transformed into $\ket{\phi}$? And asymptotically: How many copies of $\ket{\phi}$ can be extracted per copy of $\ket{\psi}$? Some parts of these questions have been answered both, mostly in the bipartite case, $k=2$, \cite{MR1960075} \cite{PhysRevA.53.2046} \cite{PhysRevLett.83.436} \cite{PhysRevLett.83.1046}. Inherent in the application of quantum measurements is the uncertainty of outcomes. For this reason some resources can only be converted to others with certain probability, and the question of asymptotic extraction therefore depends on how one demands that the probability of successful conversion behaves asymptotically. Given a probability distribution $P=(p_i)_{i=1}^d$, we consider the bipartite pure state $\ket{\bipartite{P}}=\sum_i \sqrt{p_i}\ket{ii}$. In \cite{MR1960075}, the following formula was derived for the number of EPR pairs that can be asymptotically extracted from $\ket{\bipartite{P}}$ given that the success probability behaves as $2^{-nr+o(n)}$:
\begin{equation}
E^*(r) = \inf_{\alpha\in [0,1)}\frac{r\alpha +\log\sum_i p_i^\alpha}{1-\alpha}.
\end{equation}
$r$ is known as the converse error exponent.
\\

For two multipartite pure states, $\ket{\psi}$ and $\ket{\phi}$, we let $E^*(r,\psi,\phi)$ be the number of copies of $\ket{\phi}$ that can be asymptotically extracted per copy of $\ket{\psi}$ with success probability behaving like $2^{-nr+o(n)}$. 
In Theorem \ref{mainTheorem1} we show that for $k$ parties, the optimal rate can be expressed as
\begin{equation}\label{equation15}
E^*(r,\psi,\phi) =
\inf_{f\in \Delta(\cS_k)} \frac{r\alpha(f)  + \log f(\ket{\psi})}{\log f(\ket{\phi})},
\end{equation}
where $\Delta(\cS_k)$ is a certain set of functions, which we call the asymptotic LOCC spectrum. In 
particular, an explicit description of the points of $\Delta(\cS_k)$ would 
imply a complete understanding of the asymptotic extraction rates. In Theorem \ref{UniversalPointsTheorem} we present a characterization of the functions in $\Delta(\cS_k)$.
\\

In the bipartite case the characterization from Theorem \ref{UniversalPointsTheorem} allows for an explicit description of $\Delta(\cS_2)$ (Theorem \ref{bipartiteSpectrum}). We present the following formula for the extraction rate between $\ket{\bipartite{P}}$ and $\ket{\bipartite{Q}}$ with converse error exponent $r$, generalising the formula for $E^*(r)$:
\begin{equation}
E^*(r,\bipartite{P},\bipartite{Q}) = \inf_{\alpha\in [0,1)}\frac{r \alpha +\log\sum_i p_i^\alpha}{\log\sum_i q_i^\alpha}.
\end{equation}

While it does not follow from the general theory, one might reasonably conjecture that in the $r\to 0$ limit, one obtains the optimal extraction rate with success probability going to 1. In the bipartite case we show that this is indeed true. Expressed with R\'enyi entropies in Theorem \ref{extractionRate}, the formula for the optimal extraction rate between bipartite pure states with success probability going to 1 is
\begin{equation}
E(\bipartite{P},\bipartite{Q}) = \min_{\alpha\in [0,1]}\frac{H_\alpha(P)}{H_\alpha(Q)}.
\end{equation}
This result much resembles the formula conjectured in \cite[Example 8.26]{fritz_2017} and proven in \cite{2018arXiv180805157K}, where the minimum is taken over all of $[0,\infty]$, describing the extraction rate under the condition, that the probability of success is identically $1$ for sufficiently many copies.
\\

Our results are inspired by the work of Strassen on the asymptotic 
restriction problem for tensors \cite{strassen1988asymptotic}. In that paper he establishes a 
characterization in terms of the asymptotic spectrum associated with the 
semiring of equivalence classes of tensors, equipped with the preorder 
induced by tensor restriction. We prove that the semiring $\cS_k$ of local 
unitary equivalence classes of pure unnormalized states, equipped with 
the preorder induced by LOCC convertibility satisfies similar 
properties, which leads to the characterization in (\ref{equation15}). 

\section{The asymptotic LOCC spectrum}
In \cite{strassen1988asymptotic}, Strassen considers the semiring of equivalence classes of tensors under invertible local linear transformations. This is a semiring with respect to direct sum and tensor products and the preorder, given by convertibility via local linear transformations, respects the algebraic structure of this semiring. By applying the spectral theorem \cite[Theorem 2.3]{strassen1988asymptotic}, one gets the asymptotic spectrum $\Delta(B)$ of tensors. First we recall the necessary definitions and state the theorem.
\begin{definition}
	A commutative semiring $(\cS,+,\cdot)$ is a set $\cS$ with two binary, commutative, and associative operations $(+,\cdot)$ containing distinct additive and multiplicative identity elements $0,1\in\cS$, satisfying the distributive law:
	\begin{equation}
		a(b+c)=a b+a c.
	\end{equation}
\end{definition}
	Note that what distinguishes a semiring from a ring, is that there is no guarantee of an additive inverse. In fact the semiring we will consider in this paper has no additive inverses, except for 0.
	In this paper all semirings are commutitative and semiring shall therefore be understood to implicitly mean commutative semiring.
\begin{definition}
	A preorder $\le$ on $\cS$ is a binary relation which is transistive and reflexive (but not necessarily antisymmetric).
	We say that $(\cS,+,\cdot,\le)$ is a preordered semiring, if $\le$ respects the algebraic structure on $\cS$. That is, when $a\le b$ and $c\le d$:
	\begin{align}
	a+c&\le b+d \label{OrderedSemiringsum}
	\\ac&\le bd.\label{OrderedSemiringproduct}
	\end{align}
\end{definition}
\begin{remark}\label{remarkSemiring}
	Note that in order to show conditions (\ref{OrderedSemiringsum}) and (\ref{OrderedSemiringproduct}) it suffices to show $a+c\le b+c$ and $ac\le bc$ whenever $a\le b$, since this implies $a+c\le b+c\le b+d$ whenever $a\le b$ and $c\le d$, and similarly for the product.
\end{remark}
One can always turn a semiring into a preordered semiring by defining $\le$ to be either the equality preorder ($x\le y\iff x=y$) or the other extreme preorder ($\forall x,y\in \cS: x\le y$). We shall only be interested in certain non-trivial preorders, namely semirings where $\N\subset \cS$ and the preorder restricted to $\N$ is the usual ordering of $\N$.
\begin{theorem}[{Strassen, \cite{strassen1988asymptotic}, see also \cite[Theorem 2.2]{zuiddam2018graph}}]\label{Stone}
	Let $(\cS,\le)$ be a preordered semiring with $\N\subset \cS$ satisfying the following:
	\begin{enumerate}
		\item $\le$ restricted to $\N$ is the usual ordering of $\N$. \label{item1}
		\item For any $a,b\in \cS\textbackslash \{0\}$ there is an $r\in \N$ such that $a\le rb$.\label{item2}
	\end{enumerate}
	Define the asymptotic preorder $\lesssim$ on $\cS$ by; $a\lesssim b$ if and only if $a^N\le 2^{x_N}b^N$ for some interger-valued sequence $x_N\in o(N)$. Then $(\cS,\lesssim)$ is also a preordered semiring. Let
	\begin{align*}
	\Delta(\cS)&=\left\{f \in\Hom(\cS,\R^+)\lv\forall a,b\in \cS: a\le b \implies f(a)\le f(b)\right\}.
	\end{align*}
	Then
	\begin{equation}\label{equation6}
		a\lesssim b\quad \iff\quad \forall f\in\Delta(\cS): f(a)\le f(b).
	\end{equation}
	Let $\Delta(\cS)$ be equipped with the topology generated by the maps $\hat a:\Delta(B)\to \R$, given by $\hat a:f\mapsto f(a)$. That is, $\Delta(\cS)$ is equipped with the coarsest topology making these maps continuous. Then $\Delta(\cS)$ is a compact Hausdorff space and $a\mapsto \hat a$ is a semiring homomorphism $\cS\to C(\Delta(\cS))$, which, by (\ref{equation6}), respects both $\lesssim$ and $\le$ on $\cS$. $\Delta(\cS)$ will be called the asymptotic spectrum of $\cS$.
\end{theorem}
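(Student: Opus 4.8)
The statement is Strassen's spectral theorem, and I would obtain it from a Kadison--Dubois-style representation argument. The proof falls into three parts: the easy implication of (\ref{equation6}) (together with the claim that $(\cS,\lesssim)$ is a preordered semiring), the topological assertions, and the hard implication of (\ref{equation6}) --- which is where essentially all the work lies.

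For the easy implication, suppose $a\lesssim b$, witnessed by $a^N\le 2^{x_N}b^N$ for a sequence $x_N\in o(N)$, which we may take to be $\N$-valued so that $2^{x_N}\in\N\subset\cS$. Any $f\in\Delta(\cS)$ is a monotone semiring homomorphism fixing $\N$, so $f(a)^N\le 2^{x_N}f(b)^N$ for all $N$; taking $N$-th roots and letting $N\to\infty$ gives $f(a)\le f(b)$ (when $f(b)=0$ the relation already forces $f(a)=0$). That $(\cS,\lesssim)$ is again a preordered semiring is then a routine check: transitivity and reflexivity are immediate, and compatibility with $+$ and $\cdot$ follows from the binomial expansion together with the observation that $a\lesssim b$ in fact yields the uniform bound $a^i\le 2^{\eps i+C_\eps}b^i$ for all $i$, obtained by combining the defining sequence with $a\le rb$ from condition (\ref{item2}).

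For the topology, condition (\ref{item2}) applied with second argument $1$ gives, for each $a\ne 0$, an $r_a\in\N$ with $a\le r_a$, hence $f(a)\le r_a$ for all $f\in\Delta(\cS)$; with $f(0)=0$ this realises $\Delta(\cS)$, via $f\mapsto(\hat a(f))_{a\in\cS}$, as a subset of the compact Hausdorff product $X=\prod_{a\in\cS}[0,r_a]$ (with $r_0=0$), on which the subspace topology coincides with the one generated by the $\hat a$. The image of this embedding is exactly the set of tuples $(x_a)_{a\in\cS}\in X$ satisfying the (closed) conditions $x_{a+b}=x_a+x_b$, $x_{ab}=x_ax_b$, $x_0=0$, $x_1=1$ for all $a,b$, and $x_a\le x_b$ whenever $a\le b$; hence $\Delta(\cS)$ is closed in $X$ and therefore compact Hausdorff. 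That $a\mapsto\hat a$ is a semiring homomorphism $\cS\to C(\Delta(\cS))$ is then immediate, and its compatibility with $\le$ and $\lesssim$ is precisely (\ref{equation6}).

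The substance is the hard implication: if $f(a)\le f(b)$ for all $f\in\Delta(\cS)$, then $a\lesssim b$. I would first record the reformulation that $a\lesssim b$ holds if and only if for every $\eps>0$ there is an $N$ with $a^N\le 2^{\lceil\eps N\rceil}b^N$; one direction is immediate, and the other follows by a standard diagonalization in which powers of $a$ are padded using $a\le rb$ and the accuracy $\eps$ is allowed to improve slowly with $N$. So suppose $a\not\lesssim b$ and fix $\eps>0$ with $a^N\not\le 2^{\lceil\eps N\rceil}b^N$ for all $N$; the goal is to build $f\in\Delta(\cS)$ with $f(a)>f(b)$. Here I would apply Zorn's lemma to the poset of preordered-semiring preorders $\preceq$ on $\cS$ that extend $\le$, restrict to the usual order on $\N$, still satisfy (\ref{item2}), and still have $a^N\not\preceq 2^{\lceil\eps N\rceil}b^N$ for every $N$; the last clause is chosen exactly so that the family is closed under unions of chains, so a maximal $\preceq$ exists. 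From it one extracts a candidate $f$ by the Dedekind-cut formula $f(c)=\inf\{p/q:p,q\in\N,\ q>0,\ qc\preceq p\}$, which is well defined and monotone by the archimedean property, and whose separating behaviour yields $f(a)>f(b)$. The real content --- and the step I expect to be the main obstacle --- is to show that this $f$ is genuinely additive and multiplicative, equivalently that a maximal element of the poset is the pullback along a homomorphism of the order of $\R$ rather than merely some monotone structure; this is the Kadison--Dubois representation theorem for archimedean preordered semirings, and in Strassen's setting the absence of additive inverses, together with the need to carry the quantitative factor $2^{\lceil\eps N\rceil}$ through the extension, accounts for most of the bookkeeping. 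Once (\ref{equation6}) is established, the remaining assertion that $a\mapsto\hat a$ respects $\lesssim$ is a consequence of it.
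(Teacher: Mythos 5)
The paper does not prove \cref{Stone}. It is imported verbatim as a known result from the cited references (\cite{strassen1988asymptotic}, and \cite[Theorem~2.2]{zuiddam2018graph}), and the body of the paper begins immediately after the statement by verifying that $(\cS_k,\oplus,\otimes,\le)$ meets its hypotheses. There is therefore no internal proof to compare your attempt against.

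As an independent sketch, your plan is a faithful outline of the route taken in the cited sources. The easy direction of~(\ref{equation6}), the verification that $(\cS,\lesssim)$ remains a preordered semiring via the uniform bound $a^i\le 2^{\eps i+C_\eps}b^i$ and the binomial expansion, and the compactness of $\Delta(\cS)$ by closed embedding into $\prod_{a}[0,r_a]$ are all correctly set up (condition~\ref{item2} applied to $b=1$ is indeed what gives the bounding box, and Tychonoff plus closedness does the rest). For the hard direction, the reformulation of $\lesssim$ in terms of $2^{\lceil\eps N\rceil}$ with $a\le rb$ used for padding, the Zorn's-lemma step over preorders that preserve the obstruction $a^N\not\preceq 2^{\lceil\eps N\rceil}b^N$, and the Dedekind-cut construction of $f$ are all the right shape. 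However, you explicitly flag and leave open the one step that carries essentially all the mathematical content: showing that the cut-function extracted from a maximal preorder is actually additive and multiplicative, i.e.\ a genuine semiring homomorphism, and that the separation $f(a)>f(b)$ survives. This is precisely the Kadison--Dubois/Positivstellensatz-type representation that Strassen's proof is devoted to, and in his setting it requires substantial extra bookkeeping (no additive inverses, and the need to propagate the quantitative obstruction through the maximality argument). So while your decomposition and strategy are sound and coincide with the standard one, the proposal as written is a correct scaffolding, not yet a proof; the missing step should be filled in following Strassen or Zuiddam.
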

\begin{remark}
	In this paper, we are interested in the asymptotic ordering of $\cS$ and (\ref{equation6}) is therefore the important property of the asymptotic spectrum. The topology on $\Delta(\cS)$ will not play a role, but the fact that $\cS$ maps into $C(\Delta(\cS))$ in an order preserving manner explains the use of the term spectrum.
\end{remark}
The goal of this section is to study the semiring of local unitary orbits of unnormalised pure states with preorder defined by LOCC convertibility. We show that this is a preordered semiring and that the conditions of Theorem \ref{Stone} are satisfied, yielding an LOCC spectrum.
\begin{definition}
	We define states to be positive elements
	\[
	\rho\in \cB(\cH_1)\otimes\cdots\otimes\cB(\cH_k)\otimes \Diag(\C^\cX),
	\]
	where $(\cH_i)_{i=1}^k$ are finite dimensional Hilbert spaces, $\cX$ is a finite set and  $\Diag(\C^\cX)=\spank\{\ketbra{x}{x} \ \lv\ x\in\cX\}$ is the space of diagonal matrices acting on $\C^\cX$. The cone of positive elements in\\
	$\cB(\cH_1)\otimes\cdots \otimes\cB(\cH_k)\otimes \Diag(\C^\cX)$ will be called a k-partite state space.
\end{definition}
The $\cH_i$'s are to be viewed as physically seperated quantum systems and $\cX$ as a classical register. Notice that we are not demanding that the states are normalized to $\Tr(\rho)=1$. We associate to each unnormalized state, $\rho$, the normalized state $\frac{\rho}{\Tr{\rho}}$. As we consider LOCC conversions between unnomalized states, the ratio of the traces will correspond to the probability of successful conversion between the normalized states.
\begin{definition}\label{LOCCDefinition}
	We define a one-step LOCC channel to be a map \[\Lambda:\cB(\cH_1)\otimes\cdots \otimes\cB(\cH_k)\otimes \Diag(\C^\cX)\to\cB(\cH_1)\otimes\cdots\otimes\cB(\cH'_i)\otimes\cdots\otimes\cB(\cH_k)\otimes \Diag(\C^{\cY})\] between two state spaces that is given by
	\[
	\Lambda:\rho\mapsto \sum_{j\in J} \left((K_j)_i\otimes \ketbra{g(j)}{f(j)}\right)\rho\left( ({K_j}^*)_i\otimes \ketbra{f(j)}{g(j)}\right).
	\] 
	Here $i\le k$ is a positive integer, $J$ is a finite index set and $f:J\to \cX$, $g:J\to\cY $ are maps. For each $j\in J$, $K_j:\cH_i\to \cH'_i$ is a linear map and $(K_j)_i=I_{\cH_1}\otimes\cdots \otimes K_j\otimes \cdots  \otimes I_{\cH_k}$. These maps need to satisfy
	\begin{equation}\label{KrausConstraint}
	\sum_{j\in J} {K_j}^*K_j\otimes \ketbra{f(j)}{f(j)} \le I,
	\end{equation}
	where $I$ denotes the identity operator on $\cH_i\otimes \C^\cX$. The operators $K_j$ are called the Kraus operators.
	\\\\
	An LOCC protocol is a finite sequence of composable one-step LOCC channels and the composition is an LOCC channel.
\end{definition}
The reader may have noticed that the definition we use differs from that found elsewhere in the literature. In our formulation the local channels do not depend explicitly on earlier rounds, but instead act jointly on the local system and the shared classical register, which is supposed to store the required measurement results. Conveniently, this also removes the need to model the passing of classical messages between the parties. 
One can think of a one-step LOCC channel in our sense as the act of reading the classical register ($f$), applying the channel given by the Kraus operators $K_j$ (with $j$ in the preimage of the classical variable under $f$) and writing a function of the measurement result and the old value into the register ($g$). Allowing a sequence of such transformations is clearly equivalent to the usual notion of LOCC.
If $g$ is injective, this corresponds to remembering the outcome of each measurement. There is little to gain from forgetting the measurement outcomes, and therefore it is often convenient to use one-step LOCC channels of the following form.
\begin{definition}
We say that a one-step LOCC channel, $\Lambda$, is \textit{remembering} if the Kraus operators are indexed over $J=\cY$ and $g$ is the identity map on $\cY$. That is
\begin{equation}
	\Lambda:\rho\mapsto \sum_{y\in \cY} \left((K_y)_i\otimes \ketbra{y}{f(y)}\right)\rho\left( ({K_y}^*)_i\otimes \ketbra{f(y)}{y}\right).
\end{equation}
\end{definition}
Given two states $\rho_1$ and $\rho_2$, we say that $\rho_2$ can be extracted from $\rho_1$ under LOCC and write $\rho_1\loccto \rho_2$ if there exists an LOCC channel $\Lambda$, such that $\Lambda(\rho_1)=\rho_2$. Under the identification $\cB(\cH_1)\otimes\cdots \otimes\cB(\cH_k)\simeq\cB(\cH_1)\otimes\cdots \otimes\cB(\cH_k)\otimes \Diag(\C)$ we shall also consider positive elements of the former as states.
\\

To any vector $\ket{\psi}\in \cH_1\otimes\cdots \otimes\cH_k$ we associate the pure state $\ketbra{\psi}{\psi} \in \cB(\cH_1)\otimes\cdots \otimes\cB(\cH_k)$ and we write $\ket{\psi} \loccto \ket{\phi}$ if the corresponding statement is true for their respective states.
\begin{remark}
	Note that we allow for trace non-increasing completely positive maps. So $\ket{\psi} \loccto \ket{\phi}$ means that we can convert $\frac{\ket{\psi}}{\lv\lv \psi\lv\lv}$ to $\frac{\ket{\phi}}{\lv\lv \phi\lv\lv}$ with success probability $\frac{\lv\lv \phi\lv\lv^2}{\lv\lv \psi\lv\lv^2}$.
\end{remark}
\begin{definition}
	Given $k\in \N$ and finite dimensional Hilbert spaces $\cH_1,\cH'_1,\ldots ,\cH_k,\cH'_k$ we say that $\ket{\phi}\in\cH_1\otimes\cdots \otimes\cH_k$ and $\ket{\psi}\in \cH'_1\otimes\cdots \otimes\cH'_k$ are locally unitarily equivalent, if there exist partial isometries $U_j:\cH_j\to \cH'_j$ such that
	\[
	\ket{\psi} = (U_1\otimes\cdots \otimes U_k)\ket{\phi}
	\]
	and
	\[
	\ket{\phi} = (U_1^*\otimes\cdots \otimes U_k^*)\ket{\psi}.
	\]
	Let $\cS_k$ denote the set of equivalence classes.
\end{definition} 
Note that for any two representatives, $[\ket{\psi}]=[\ket{\phi}]$, of an element of $\cS_k$, the partial isometries witnessing this equivalence define $k$-step LOCC channels mapping one to the other and back; $\ket{\psi}\loccto \ket{\phi}\loccto\ket{\psi}$. In other words, states that are locally unitarily equivalent are also LOCC-equivalent. The following preorder is therefore well-defined:
\[
	\left[\ket{\psi}\right]\ge\left[\ket{\phi}\right] \text{	iff		} \ket{\psi} \loccto\ket{\phi}.
\]
By \cite[Corollary 1]{PhysRevA.63.012307}, LOCC equivalence also implies local unitary equivalence. So the above preorder is in fact a partial order. This is not of importance for the theory to work, but still worth noting.
\\\\
When $\ket{\psi}\in \cH_1\otimes\cdots \otimes\cH_k$ and $\ket{\phi}\in \cH'_1\otimes\cdots \otimes \cH'_k$ we may take direct sum and tensor product to get new $k$-partite states
\[
\ket{\psi}\oplus\ket{\phi}\in (\cH_1\oplus\cH'_1)\otimes\cdots \otimes(\cH_k\oplus\cH'_k)
\]
\[
\ket{\psi}\otimes\ket{\phi}\in (\cH_1\otimes\cH'_1)\otimes\cdots \otimes(\cH_k\otimes\cH'_k).
\]
Both sum and product respect local unitary equivalence, turning $(\cS_k,\oplus,\otimes)$ into a semiring. We wish to apply Theorem \ref{Stone} to $(\cS_k,\oplus,\otimes,\le)$. For this purpose, what remains to be shown is that $(\cS_k,\oplus,\otimes,\le)$ is a preordered semiring and that conditions \ref{item1} and \ref{item2} of Theorem \ref{Stone} are satisfied. We start out by showing that it is a preordered semiring. (\ref{OrderedSemiringproduct}) is immediate, so we proceed to proving (\ref{OrderedSemiringsum}), which is done in Proposition \ref{DirectSumWithFixed}.
\\\\
We say that a state $\rho\in \cB(\cH_1)\otimes\cdots \otimes\cB(\cH_k)\otimes \text{Diag}(\C^\cX)$ is conditionally pure if it can be written in the form
\begin{equation}
	\rho = \sum_{x\in \cX} \ketbra{\phi_x}{\phi_x}\otimes\ketbra{x}{x}.
\end{equation}
Proposition \ref{SimpleLOCCprotocols} is well known and establishes that we can restrict our attention to protocols which keep track of all measurements until finally throwing away the register. For completeness we provide a proof of Proposition \ref{SimpleLOCCprotocols}. The proof will be induction on the following lemma:
\begin{lemma}\label{RemeberingLemma1}
	Let $\Lambda=\Lambda_2\circ\Lambda_1$ be a two-step LOCC channel. Then $\Lambda = \Lambda_2'\circ\Lambda_1'$, for some two-step LOCC protocol $(\Lambda_2',\Lambda_1')$ where $\Lambda_1'$ is remembering.
\end{lemma}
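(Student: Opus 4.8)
The plan is to keep all the measurement outcomes of the first channel rather than compressing them, and to compensate by enlarging the index set of the second channel so that it can still read off the resulting finer register. Write $\Lambda_1$ in the general form of Definition~\ref{LOCCDefinition} with index set $J_1$, Kraus operators $(K_j)_{j\in J_1}$ acting on party $i_1$, and maps $f_1\colon J_1\to\cX$, $g_1\colon J_1\to\cY$; and write $\Lambda_2$ with index set $J_2$, Kraus operators $(L_l)_{l\in J_2}$ acting on party $i_2$, reading map $f_2\colon J_2\to\cY$, and writing map $g_2\colon J_2\to\mathcal{Z}$.

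First I would define $\Lambda_1'$ to have the \emph{same} Kraus operators $(K_j)_{j\in J_1}$ on party $i_1$, output register $\C^{J_1}$, reading map $f_1$, and writing map $\mathrm{id}_{J_1}$; then $\Lambda_1'$ is remembering by definition, and its Kraus constraint is literally the one satisfied by $\Lambda_1$. Next I would set $\Lambda_2'$ to have index set $J_2' = \{(l,j)\in J_2\times J_1 \colon f_2(l)=g_1(j)\}$, Kraus operator $L'_{(l,j)} := L_l$ on party $i_2$, reading map $(l,j)\mapsto j$, and writing map $(l,j)\mapsto g_2(l)$. The constraint (\ref{KrausConstraint}) for $\Lambda_2'$ decomposes over the blocks $\ketbra{j}{j}$ of $\C^{J_1}$, the $j$-th block being $\sum_{l\,\colon\, f_2(l)=g_1(j)} L_l^*L_l \le I_{\cH_{i_2}}$, which is exactly the $\ketbra{g_1(j)}{g_1(j)}$-block of (\ref{KrausConstraint}) for $\Lambda_2$; hence $\Lambda_2'$ is a legitimate one-step LOCC channel and $(\Lambda_2',\Lambda_1')$ is a composable protocol.

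It then remains to check $\Lambda_2'\circ\Lambda_1' = \Lambda$. Expanding $\Lambda_2'(\Lambda_1'(\rho))$, the register factors multiply as $\ketbra{g_2(l)}{j'}\,\ketbra{j}{f_1(j)} = \delta_{j,j'}\,\ketbra{g_2(l)}{f_1(j)}$, collapsing the double sum to a sum over pairs $(l,j)$ with $f_2(l)=g_1(j)$, with operator factor $\bigl((L_l)_{i_2}(K_j)_{i_1}\bigr)\otimes\ketbra{g_2(l)}{f_1(j)}$; expanding $\Lambda_2(\Lambda_1(\rho))$ directly, the product $\ketbra{g_2(l)}{f_2(l)}\,\ketbra{g_1(j)}{f_1(j)} = \delta_{f_2(l),g_1(j)}\,\ketbra{g_2(l)}{f_1(j)}$ yields the identical sum. (The case $i_1=i_2$ needs no separate treatment: $(L_l)_{i_2}(K_j)_{i_1}$ is then the single local Kraus operator $(L_lK_j)_{i_1}$, and the register computation is unchanged.) The only genuine subtlety — the step I would be most careful about — is that $g_1$ need not be injective, which is why $\Lambda_2'$ must fiber its index set over $J_1$ rather than merely relabel the register of $\Lambda_2$; everything else is routine bookkeeping.
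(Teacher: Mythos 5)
Your proof is correct, and it is a close variant of the paper's argument with a useful structural difference. Both constructions exploit the fibered set $\{(l,j) : f_2(l)=g_1(j)\}$ to index compatible pairs, but the paper uses this set as the Kraus index set for \emph{both} $\Lambda_1'$ and $\Lambda_2'$ (with the intermediate register $\C^\cX$ indexed by it as well), whereas you leave $\Lambda_1'$'s index set as $J_1$, take $\C^{J_1}$ as the intermediate register, and fiber only $\Lambda_2'$. Your version is the more robust one: your $\Lambda_1'$ has literally the same Kraus constraint as $\Lambda_1$, whereas in the paper's construction the operator $K_{j_1}^1$ is repeated once per element of the fiber $\{j_2 : f_2(j_2)=g_1(j_1)\}$, so if that fiber has more than one element and $\Lambda_1$'s constraint is tight at $f_1(j_1)$, the sum $\sum_{(j_1,j_2)\in\cX}(K_{j_1}^1)^*K_{j_1}^1\otimes\ketbra{f_1(j_1)}{f_1(j_1)}$ can exceed the identity; a normalization of the repeated Kraus operators (or a different choice of intermediate register, as you make) is then needed to keep $\Lambda_1'$ a valid channel. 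Your closing remark that $g_1$ need not be injective identifies exactly the source of this subtlety, and fibering only the second channel — whose constraint decomposes blockwise into pieces of $\Lambda_2$'s constraint — is what lets the argument go through without adjustment. (Minor typographical slip: in your register computation the factor should read $\ketbra{g_2(l)}{j}\,\ketbra{j'}{f_1(j')}=\delta_{j,j'}\ketbra{g_2(l)}{f_1(j')}$, but this is clearly a label swap and does not affect the argument.)
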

\begin{proof}
	Let $A_r=\cB(\cH_1^r)\otimes\cdots\otimes \cB(\cH_k^r)\otimes \Diag(\C^{\cX_r})$ for $r=0,1,2$ be the three state spaces in question. 
	\begin{equation}
		A_0\stackrel{\Lambda_1}{\to} A_1\stackrel{\Lambda_2}{\to} A_2. 
	\end{equation}
	Let $\Lambda_r$ be defined as in Definition \ref{LOCCDefinition} by $f_r,g_r,J_r,i_r$ and $(K_j^r)_{j\in J_r}$ for $r=1,2$. We first expand the register, $\cX_1$ of $A_1$: Let 
	\begin{equation}
		J=\cX=\{(j_1,j_2)\in J_1\times J_2| f_2(j_2)=g_1(j_1)\}
	\end{equation}
	and let $A_1'=\cB(\cH_1^1)\otimes\cdots\otimes \cB(\cH_k^1)\otimes \Diag(\C^{\cX})$. Define the remembering channel $\Lambda'_1:A_0\to A_1'$ by the index map $f'_1:\cX\to \cX_0$ given as $f'_1:(j_1,j_2)\mapsto f_1(j_1)$. 
	The Kraus operators for $\Lambda_1'$ are $(K^1_x)_{x\in \cX}$, 
	where $K^1_{(j_1,j_2)}=K^1_{j_1}$. Define $\Lambda_2'$ in a similar manner: The Kraus operators $(K_x^2)_{x\in\cX}$ are indexed over $\cX$ with $K_{(j_1,j_2)}^2=K_{j_2}^2$ and applied via the index maps $f_2'=\text{id}:\cX\to\cX$ and $g_2':(j_1,j_2)\mapsto g_2(j_2)$. Now
	\begin{equation}
	\begin{split}
		&\Lambda_2'\circ\Lambda_1'(\rho)
		\\
		=&\sum_{(j_1,j_2)\in\cX} \bigg[(K_{j_2}^2)_{i_2}(K_{j_1}^1)_{i_1}\otimes\ketbra{g_2(j_2)}{f_1(j_1)}\bigg]
		\rho
		\bigg[(K_{j_1}^1)_{i_1}^*({K_{j_2}^2})_{i_2}^*\otimes\ketbra{f_1(j_1)}{g_2(j_2)}\bigg]
		\\
		=&\Lambda_2\circ\Lambda_1(\rho).
	\end{split}
	\end{equation}
\end{proof}
\begin{remark}\label{TraceRemark}
	Note that for the construction in the proof of Lemma \ref{RemeberingLemma1}, if $\cX_1=\cX_2$ is a one-point set and $A_1=A_2$ and $\Lambda_2=I_{A_1}$ is the identity channel, then $\Lambda_2'=\Tr_{\Diag(\C^\cX)}$ is just the partial trace of the register.
\end{remark}
\begin{proposition}\label{SimpleLOCCprotocols}
	Given a channel $\Lambda$ for which the final state space has a one-point register, there exists an LOCC protocol $(\Lambda_n,\ldots,\Lambda_1)$, consisting of remembering one-step LOCC channels, such that
	\begin{equation}
		\Lambda = \Tr_{\Diag(\C^{\cX_n})}\circ \Lambda_n\circ\cdots\circ\Lambda_1.
	\end{equation}
	Here $\cX_i$ is the $i$'th register and $\Tr_{\Diag(\C^{\cX_n})}$ is the partial trace of the final register.
\end{proposition}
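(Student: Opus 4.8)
The plan is to induct on the number $m$ of one-step LOCC channels in a fixed decomposition $\Lambda = \Lambda_m\circ\cdots\circ\Lambda_1$ of the given channel; such a decomposition exists by the definition of an LOCC protocol. The whole argument will run on Lemma~\ref{RemeberingLemma1}, which lets us replace the first two factors $\Lambda_2\circ\Lambda_1$ of any protocol by $\Lambda_2'\circ\Lambda_1'$ with $\Lambda_1'$ remembering, at the cost of enlarging the intermediate register. The feature to exploit is that in that construction $\Lambda_1'$ keeps the source of $\Lambda_1$ and $\Lambda_2'$ keeps the target of $\Lambda_2$, so the modified pair slots back into the rest of the protocol without breaking composability.

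For the base case $m=1$ I would write $\Lambda = I_{A_1}\circ\Lambda$, where $A_1$ is the final state space and $I_{A_1}$ is the identity channel on it; since the register of $A_1$ is a one-point set, $I_{A_1}$ is a one-step LOCC channel in the sense of Definition~\ref{LOCCDefinition}. Applying Lemma~\ref{RemeberingLemma1} to the two-step channel $I_{A_1}\circ\Lambda$ and invoking Remark~\ref{TraceRemark}, whose hypotheses (one-point final register, identity second channel) are met precisely here, rewrites $\Lambda$ as $\Tr_{\Diag(\C^{\cX})}\circ\Lambda'$ with $\Lambda'$ remembering --- the claimed form with $n=1$.

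For the inductive step I would apply Lemma~\ref{RemeberingLemma1} to the first two factors of $\Lambda = \Lambda_m\circ\cdots\circ\Lambda_1$, obtaining $\Lambda = \bigl(\Lambda_m\circ\cdots\circ\Lambda_3\circ\Lambda_2'\bigr)\circ\Lambda_1'$ with $\Lambda_1'$ remembering. The parenthesised channel is a composition of $m-1$ one-step LOCC channels (here $\Lambda_2'$ is one-step because the lemma produces a two-step protocol) and still has a one-point final register, so the induction hypothesis rewrites it as $\Tr_{\Diag(\C^{\cX})}\circ\Theta$ with $\Theta$ a composition of remembering one-step LOCC channels whose source is the target of $\Lambda_1'$. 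Since $\Lambda_1'$ is itself remembering, $\Lambda = \Tr_{\Diag(\C^{\cX})}\circ\Theta\circ\Lambda_1'$ is then of the desired form (after relabelling the factors), completing the induction.

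The only thing demanding care is the bookkeeping: one has to track the (growing) registers through each application of Lemma~\ref{RemeberingLemma1} to confirm that sources and targets still match, and one has to check that the identity channel used in the base case genuinely meets Definition~\ref{LOCCDefinition} and the hypotheses of Remark~\ref{TraceRemark}. Both are routine, and all the substance already sits in Lemma~\ref{RemeberingLemma1}; I do not expect an essential obstacle beyond it.
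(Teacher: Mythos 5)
Your proof is correct and follows essentially the same route as the paper: iterate Lemma~\ref{RemeberingLemma1} from the left, prepend the identity on the final (one-point-register) state space, and use Remark~\ref{TraceRemark} to convert the last modified channel into a partial trace of the register. The paper compresses this into the phrase ``applying Lemma~\ref{RemeberingLemma1} $n$ times''; your explicit induction on the number of one-step factors merely makes the same bookkeeping transparent, so the substance is identical.
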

\begin{proof}
	Let $\Lambda$ be the composition of an $n$-step LOCC protocol and let $A_n$ be the final state space. Then $\Lambda=I_{A_n}\circ\Lambda$. By applying Lemma \ref{RemeberingLemma1} $n$ times and by Remark \ref{TraceRemark}
	\begin{equation}
		\Lambda = \Tr_{\Diag(\C^{\cX_n})}\circ\Lambda_n\circ\cdots\circ\Lambda_1,
	\end{equation}
	where $\Lambda_i$ is a remembering one-step channel for $i=1,\ldots, n$. 
\end{proof}
\begin{proposition}\label{DirectSumWithFixed}
	Let $\ket{\phi_1}$, $\ket{\phi_2}$ and $\ket{\psi}$ be k-tensors, then
	\[\ket{\phi_1}\loccto\ket{\phi_2}\quad \implies\quad\ket{\phi_1}\oplus\ket{\psi}\loccto\ket{ \phi_2}\oplus\ket{\psi}.\]
\end{proposition}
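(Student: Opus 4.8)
The plan is to lift an LOCC protocol realising $\ket{\phi_1}\loccto\ket{\phi_2}$ to one that runs it on the $\phi$-summand of the direct sum while carrying $\ket{\psi}$ along coherently. Write $\ket{\phi_1}\in\cK_1\otimes\cdots\otimes\cK_k$, $\ket{\phi_2}\in\cK_1'\otimes\cdots\otimes\cK_k'$ and $\ket{\psi}\in\cH_1\otimes\cdots\otimes\cH_k$, so that $\ket{\phi_1}\oplus\ket{\psi}$ and $\ket{\phi_2}\oplus\ket{\psi}$ are the vectors $\ket{\phi_1}+\ket{\psi}$ and $\ket{\phi_2}+\ket{\psi}$ in $\bigotimes_j(\cK_j\oplus\cH_j)$ and $\bigotimes_j(\cK_j'\oplus\cH_j)$ respectively. (We may assume $\ket{\phi_2}\neq 0$; otherwise $\ket{\phi_2}\oplus\ket{\psi}$ represents the same element of $\cS_k$ as $\ket{\psi}$, and the local projections onto the summands $\cH_j$ already give the conversion.) First I would fix an LOCC channel $\Lambda_0$ with $\Lambda_0(\ketbra{\phi_1}{\phi_1})=\ketbra{\phi_2}{\phi_2}$ and, by Proposition~\ref{SimpleLOCCprotocols}, write it as $\Tr_{\Diag(\C^{R_n})}\circ\Lambda_n\circ\cdots\circ\Lambda_1$, where each $\Lambda_m$ is a remembering one-step channel acting on a party $i_m$, with register $R_m$ (and $R_0$ a point), index map $f_m\colon R_m\to R_{m-1}$ and Kraus operators $(K^{(m)}_y)_{y\in R_m}$. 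Since the $\Lambda_m$ are remembering, $\Lambda_n\circ\cdots\circ\Lambda_1(\ketbra{\phi_1}{\phi_1})$ is conditionally pure, with branch-$\vec y$ vector $\ket{\chi_{\vec y}}=N^{(\vec y)}\ket{\phi_1}$, where $N^{(\vec y)}$ is the (product) operator obtained by composing the Kraus operators along the branch $\vec y$; taking the partial trace gives $\sum_{\vec y}\ketbra{\chi_{\vec y}}{\chi_{\vec y}}=\ketbra{\phi_2}{\phi_2}$, and since a sum of positive operators of rank $\le 1$ equal to a rank-one operator consists of scalar multiples of it, $\ket{\chi_{\vec y}}=c_{\vec y}\ket{\phi_2}$ for scalars $c_{\vec y}$ with $\sum_{\vec y}\lvert c_{\vec y}\rvert^{2}=1$.

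Next I would define a protocol on the direct-sum state spaces in which, at step $m$, party $i_m$ applies the remembering one-step channel with the same index set $R_m$ and index map $f_m$, but with Kraus operators $\tilde K^{(m)}_y:=K^{(m)}_y\oplus\mu_m(y)\,I_{\cH_{i_m}}$ (so the summands $\cH_j$ are carried passively), followed at the end by the partial trace of the register — itself a one-step LOCC channel, as in Proposition~\ref{SimpleLOCCprotocols} and Remark~\ref{TraceRemark}. The composition of the lifted Kraus operators along a branch $\vec y$ restricts to $N^{(\vec y)}$ on the block $\bigotimes_j\cK_j$ and to the scalar $\prod_m\mu_m(\vec y|_m)$ on the complementary block $\bigotimes_j\cH_j$. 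So if one arranges $\prod_m\mu_m(\vec y|_m)=c_{\vec y}$ for every branch and $\sum_{y\colon f_m(y)=x}\lvert\mu_m(y)\rvert^{2}\le 1$ for every $m$ and every $x\in R_{m-1}$, then each $\tilde K^{(m)}$ satisfies the Kraus constraint (its $\cK$-block is handled by the constraint for $\Lambda_m$, its $\cH_{i_m}$-block by the second condition, and a direct sum of operators is $\le I$ exactly when each summand is); the lifted protocol is again remembering, so on input $\ketbra{\phi_1\oplus\psi}{\phi_1\oplus\psi}$ its output before the final trace is conditionally pure with branch-$\vec y$ vector $c_{\vec y}\ket{\phi_2}+c_{\vec y}\ket{\psi}=c_{\vec y}(\ket{\phi_2}\oplus\ket{\psi})$; and tracing out the register yields $\sum_{\vec y}\lvert c_{\vec y}\rvert^{2}\,\ketbra{\phi_2\oplus\psi}{\phi_2\oplus\psi}=\ketbra{\phi_2\oplus\psi}{\phi_2\oplus\psi}$, which is the claim. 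Such scalars exist: regard $R_n$ as the leaf set of the protocol tree (with edges given by the maps $f_m$), put weight $\lvert c_{\vec y}\rvert^{2}$ on the leaf $\vec y$, let $W(u)$ be the total weight of the leaves below $u$, and take $\mu_m(y):=\sqrt{W(y)/W(f_m(y))}$ for $m<n$ and $\mu_n(y):=\frac{c_{\vec y}}{\lvert c_{\vec y}\rvert}\sqrt{W(y)/W(f_n(y))}$, with the convention $0/0:=0$. The sub-normalisation then holds because the children of a vertex partition the leaves below it, and $\prod_m\mu_m(\vec y|_m)=c_{\vec y}$ because $\prod_m W(\vec y|_m)/W(\vec y|_{m-1})$ telescopes to $W(\vec y)/W(\mathrm{root})=\lvert c_{\vec y}\rvert^{2}$.

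The one genuinely delicate point is the choice of the scalars $\mu_m(y)$, so it is worth saying why the naive approach fails. One would like each party to first measure which of the two direct summands it is in and then run $\Lambda_0$ conditioned on that outcome; but $\ket{\phi_2}\oplus\ket{\psi}$ is a coherent superposition of the two summands, whereas that measurement decoheres it, so the naive protocol produces only the mixture $\ketbra{\phi_2}{\phi_2}+\ketbra{\psi}{\psi}$. The remedy is to keep $\ket{\psi}$ present in every branch and rescale it by exactly the branch amplitude $c_{\vec y}$ of the $\phi_2$-part, so that the branches recombine coherently when the register is finally traced out; the weight-flow normalisation above is the single choice of scalars that simultaneously reproduces these amplitudes and respects the sub-normalisation forced by the Kraus constraint at every step. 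I expect the remaining work to be routine bookkeeping of the nested direct sums and registers rather than anything conceptually harder than this.
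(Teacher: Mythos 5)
Your proof is correct and follows essentially the same route as the paper's: both first invoke Proposition~\ref{SimpleLOCCprotocols} to pass to a remembering protocol, and then lift each Kraus operator $K^{(m)}_y$ to $K^{(m)}_y\oplus\mu_m(y)\,I$ on the direct sum, choosing the scalars so that they obey the sub-normalisation constraint on the $\psi$-block and so that their product along each branch reproduces the branch amplitude of the $\phi_2$-part. The paper obtains these scalars by induction on the number of steps: its $\sqrt{c_x}$ at the first level, with $c_x=\sum_{y\in\cY_x}a_y$, is precisely your $\sqrt{W(x)/W(\mathrm{root})}$, and the remaining ratios are folded into the inductive appeal to the existence of $\Lambda'_x$ for the tail protocol. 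Your version unrolls the recursion into a single explicit weight-flow assignment, which makes both the telescoping to $c_{\vec y}$ and the Kraus constraint (leaves below a node partitioning into the children's subtrees) transparent, and it spells out the per-branch phase correction $c_{\vec y}/\lvert c_{\vec y}\rvert$ at the last level, a point the paper's induction silently absorbs into the existence of $\Lambda'_x$. Both arguments are sound; yours is the same proof presented non-recursively with the one delicate point made explicit.
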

\begin{proof}
	By Proposition \ref{SimpleLOCCprotocols},
	\begin{equation}
		\ketbra{\phi_2}{\phi_2} = \Tr_{\Diag(\C^{\cX_n})}\circ \Lambda_n\circ\cdots\circ\Lambda_1 \ketbra{\phi_1}{\phi_1}
	\end{equation}
	for some remembering protocol $(\Lambda_n,\ldots,\Lambda_1)$. This implies that
	\begin{equation}
		\Lambda_n\circ\cdots\circ\Lambda_1 \ketbra{\phi_1}{\phi_1} = \sum_{y\in \cX_n } a_y\ketbra{\phi_2}{\phi_2}\otimes \ketbra{y}{y}
	\end{equation}
	for some $a_y\ge 0$ with $\sum_{y}a_y=1$. It suffices to show that there exists some LOCC channel $\Lambda'$ such that 
	\begin{equation}
		\Lambda' \ketbra{\phi_1\oplus\psi}{\phi_1\oplus \psi} = \sum_{y\in \cX_n } a_y\ketbra{\phi_2\oplus \psi}{\phi_2\oplus \psi}\otimes \ketbra{y}{y}.
	\end{equation}
	This is shown by induction on $n$. For $n=0$ it is trivial. Assume that it is possible for $(n-1)$-step protocols. Let $(K_x)_{x\in \cX_1}$ be the Kraus operators for $\Lambda_1$ acting on system $i$.
	\begin{equation}
		\Lambda_1\ketbra{\phi_1}{\phi_1} = \sum_{x\in \cX_1} \ketbra{\phi_x}{\phi_x}\otimes \ketbra{x}{x}
	\end{equation}
	where $\ket{\phi_x}=(K_x)_{i}\ket{\phi_1}$ and $\sum \la \phi_x|\phi_x\ra\le \la \phi_1|\phi_1\ra$.
	For each $x\in \cX_1$
	\begin{equation}
		\Lambda_n\circ\cdots\circ\Lambda_2 \ketbra{\phi_x}{\phi_x} =\sum_{y\in \cY_x} a_y\ketbra{\phi_2}{\phi_2}\otimes \ketbra{y}{y},
	\end{equation}
	where $(\cY_x)_{x\in\cX_1}$ is a partition of $\cX_n$. Let $c_x=\sum_{y\in \cY_x} a_y$. By the induction hypothesis there exist channels $(\Lambda_x')_{x\in\cX_1}$ such that
	\begin{equation}
	\Lambda_x' \Big[\ketbra{\phi_x\oplus \sqrt{c_x}\psi}{\phi_x\oplus \sqrt{c_x} \psi}\otimes\ketbra{x}{x}\Big] = \sum_{y\in \cY_x } a_y\ketbra{\phi_2\oplus \psi}{\phi_2\oplus \psi}\otimes \ketbra{y}{y}.
	\end{equation}
	Define $\Lambda_1'$ by the Kraus operators $K_x'=K_x\oplus \sqrt{c_x} I$ for each $x\in \cX_1$, where $I$ is the identity operator acting on the $i$'th system on which $\ket{\psi}$ lives. Then
	\begin{equation}
		\Lambda_1'\ketbra{\phi_1\oplus \psi}{\phi_1\oplus \psi}=\sum_{x\in \cX_1}\ketbra{\phi_x\oplus \sqrt{c_x}\psi}{\phi_x\oplus \sqrt{c_x}\psi}\otimes \ketbra{x}{x}.
	\end{equation}
	Let $\tilde{\Lambda}$ be the LOCC channel
	\begin{equation}
		\tilde{\Lambda}:\sum_{x\in \cX_1}\rho_x\otimes \ketbra{x}{x}\mapsto \sum_{x\in \cX_1} \Lambda_x'\Big[\rho_x\otimes \ketbra{x}{x}\Big]
	\end{equation}
	Now
	\begin{equation}
	\begin{split}
		\tilde{\Lambda}\circ\Lambda_1'\ketbra{\phi_1\oplus\psi}{\phi_1\oplus\psi} 
		&
		=
		\tilde{\Lambda}\sum_{x\in \cX_1}\ketbra{\phi_x\oplus \sqrt{c_x}\psi}{\phi_x\oplus \sqrt{c_x}\psi}\otimes \ketbra{x}{x}
		\\&
		=
		\sum_{x\in \cX_1}\sum_{y\in \cY_x} a_y\ketbra{\phi_2\oplus \psi}{\phi_2\oplus \psi}\otimes \ketbra{y}{y}
		\\&
		=\sum_{y\in \cX_n} a_y\ketbra{\phi_2\oplus \psi}{\phi_2\oplus \psi}\otimes \ketbra{y}{y}.
	\end{split}
	\end{equation}
\end{proof}
By Remark \ref{remarkSemiring} it follows that equation (\ref{OrderedSemiringsum}) holds for $(\cS_k,\oplus,\otimes,\le)$, which is therefore a preordered semiring.
\\

It remains to be shown that conditions \ref{item1} and \ref{item2} in Theorem \ref{Stone} are satisfied. The multiplicative unit in $\cS_k$ is represented by the pure state $\ket{0\ldots 0}\in \C^{\otimes k}$ and the additive unit is represented by the zero-vector $0\in \C^{\otimes k}$. $\N$ embeds into $\cS_k$ in the following sense: An integer $d\in \N$ is represented in $\cS_k$ by the $d$-level, $k$-partite, unnormalized GHZ state
\begin{equation}
	\ket{\GHZ_d} =\sum_{i=0}^{d-1} \ket{i\ldots i}\in (\C^d)^{\otimes k},
\end{equation}
and $\ket{\GHZ_{d_1}}\loccto \ket{\GHZ_{d_2}}$ iff $d_1\ge d_2$, so \ref{item1} holds.
\\

We proceed by proving that condition \ref{item2} holds:
\begin{proposition}
	For any non-zero pure states $\ket{\psi}$ and $\ket{\phi}$, there is a $d\in \N$ such that
	\begin{equation}
		\ket{\GHZ_d}\otimes \ket{\psi} \loccto \ket{\phi}.
	\end{equation}
\end{proposition}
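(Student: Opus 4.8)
The plan is to produce $\ket\phi$ essentially from nothing, using a single round of purely local operations applied to $\ket{\GHZ_d}$, with $\ket\psi$ contributing only an auxiliary scalar. The key point is that a GHZ state with many levels is ``maximally correlated'': every party privately holds the same index $x$, so local maps sending $\ket{x}$ to (scalar multiples of) chosen basis vectors of the target systems let the parties jointly realize any prescribed amplitude pattern. Since $\ket\psi\neq 0$ and the product vectors span $\cH_1\otimes\cdots\otimes\cH_k$, I would first pick unit vectors $\ket{e_i}\in\cH_i$ with $\lambda:=\langle e_1\otimes\cdots\otimes e_k|\psi\rangle\neq 0$; contracting $\ket\psi$ against the $\ket{e_i}$'s is a legitimate local operation (a single norm-one Kraus operator, trivial register) that merely multiplies the state vector by $\lambda$.

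Concretely, fix orthonormal bases of the target spaces, write $\ket\phi=\sum_{\vec{j}}c_{\vec{j}}\ket{j_1\cdots j_k}$ with $\vec{j}=(j_1,\dots,j_k)$ ranging over $[n_1]\times\cdots\times[n_k]$, where $n_1,\dots,n_k$ are the dimensions of the $k$ tensor factors $\ket\phi$ lies in, choose an integer $m\geq\|\phi\|/|\lambda|$, and set $d=m\,n_1\cdots n_k$, identifying the $d$ levels of $\ket{\GHZ_d}$ with pairs $(t,\vec{j})$, $t\in[m]$. I would then let party $i\geq 2$ apply the Kraus operator $K_i=\bigl(\sum_{t,\vec{j}}\ket{j_i}\!\bra{(t,\vec{j})}\bigr)\otimes\bra{e_i}$ and party $1$ apply $K_1=\bigl(\sum_{t,\vec{j}}\tfrac{c_{\vec{j}}}{m\lambda}\ket{j_1}\!\bra{(t,\vec{j})}\bigr)\otimes\bra{e_1}$; each is a one-step LOCC channel with a trivial register, so composing the $k$ of them is an LOCC channel. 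A direct computation shows $\bigl(K_1\otimes\cdots\otimes K_k\bigr)\bigl(\ket{\GHZ_d}\otimes\ket\psi\bigr)=\sum_{t,\vec{j}}\tfrac{c_{\vec{j}}}{m}\ket{j_1\cdots j_k}=\ket\phi$: the contractions against the $\ket{e_i}$'s produce the factor $\lambda$ that cancels the $1/\lambda$ in $K_1$, and the summand no longer depends on $t$, so summing over the $m$ values of $t$ produces the factor $m$ that cancels the $1/m$. This establishes $\ket{\GHZ_d}\otimes\ket\psi\loccto\ket\phi$ once the Kraus constraints below are checked.

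What is left to verify is the Kraus constraint $K_i^{*}K_i\leq I$ for each party. For $i\geq 2$ one gets $K_i^{*}K_i=I_{\C^d}\otimes\ket{e_i}\!\bra{e_i}\leq I$ exactly. For $i=1$ one gets a diagonal operator of norm $\max_{\vec{j}}|c_{\vec{j}}|^{2}/(m|\lambda|)^{2}\leq\|\phi\|^{2}/(m|\lambda|)^{2}\leq 1$, precisely by the choice of $m$. I do not expect a real obstacle: the mechanism — that a many-level GHZ state is ``local-operation-universal'' — is elementary, and the only genuine care needed is the bookkeeping that forces the output to be exactly $\ket\phi$ rather than a scalar multiple of it, which is why each amplitude $c_{\vec{j}}$ is spread over the $m$ dummy levels indexed by $t$ and why $\lambda$ is divided out inside $K_1$. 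Alternatively, one could separate the argument into the two claims $\ket\psi\loccto\lambda\ket{0\cdots0}$ and $\lambda\ket{\GHZ_d}\loccto\ket\phi$ for $d$ large, combining them via the fact that $\leq$ is preserved under tensoring by the fixed state $\ket{\GHZ_d}$.
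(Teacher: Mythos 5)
The construction breaks at the Kraus constraint for the parties $i\ge 2$. You claim
$K_i^{*}K_i=I_{\C^d}\otimes\ketbra{e_i}{e_i}$, but this is false: writing
$A_i=\sum_{t,\vec j}\ket{j_i}\bra{(t,\vec j)}$, the vectors $\ket{(t,\vec j)}$ with a fixed value of $j_i$ form a set of size $d/n_i$ and are \emph{all} mapped to the same unit vector $\ket{j_i}$, so $A_i$ is not a partial isometry. A direct computation gives
$A_iA_i^{*}=\tfrac{d}{n_i}\,I_{\C^{n_i}}$, hence $\lV K_i\rV^2=\lV A_i\rV^2=d/n_i=m\prod_{l\neq i}n_l$, which grows with $m$ rather than being at most $1$. (The same issue contaminates $K_1$: $K_1^{*}K_1$ is not diagonal, and its norm is $\max_{j_1}\sum_{j_2,\ldots,j_k}\lvert c_{\vec j}\rvert^2/(m\lvert\lambda\rvert^2)$; that one could be fixed by enlarging $m$, but the $i\ge2$ problem only gets worse as $m\to\infty$.) Rescaling the $K_i$ for $i\ge2$ to make them contractions forces a compensating blow-up in $K_1$, so the gap is not a matter of bookkeeping.

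The underlying obstruction is conceptual: the map ``forget $t$ and the other components $\vec j_{-i}$'' cannot be implemented coherently by a single local contraction, because the discarded indices are perfectly correlated with what the other parties hold. Without reading a classical register, party $1$ cannot know which branch the others collapsed to, so a one-round product $K_1\otimes\cdots\otimes K_k$ cannot coordinate the amplitudes. The GHZ state is not ``local-operation-universal'' in the sense you invoke; what is true is that it becomes universal once classical communication is allowed. That is exactly what the paper's proof does: one party locally prepares the (normalized) target $\ket{\phi}$, converts the GHZ correlations into EPR pairs, and \emph{teleports} the remaining factors to the other parties, using the classical register to transmit measurement outcomes and applying outcome-dependent local corrections. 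The residual normalization mismatch $\lV\psi\rV/\lV\phi\rV$ is then absorbed by tensoring with extra GHZ levels and tracing them out. Your closing alternative (first contract $\ket\psi$ to a scalar, then show $\lambda\ket{\GHZ_d}\loccto\ket\phi$ for large $d$) is essentially the paper's route, but the second step is precisely the teleportation argument you would still need to supply, together with the scaling trick to handle the case $\lvert\lambda\rvert<\lV\phi\rV$.
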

\begin{proof}
	By having one party locally construct the normalized $\ket{\phi}$, converting $\operatorname{GHZ}$ states to $\operatorname{EPR}$ pairs between parties and using quantum teleportation \cite{PhysRevLett.70.1895} \cite[s. 6.5.3]{Nielsen:2011:QCQ:1972505}, one obtains a protocol that extracts the normalized version of $\ket{\phi}$. Furthermore $\ket{\psi}\loccto \lV\psi\rV\ket{\GHZ_1}$. So for sufficiently large $d$
	\begin{equation}
	\ket{\GHZ_d}\otimes \ket{\psi}\loccto\frac{1}{\lV \phi\rV}\ket{\phi}\otimes\ket{\psi}\loccto \frac{\lV\psi\rV}{\lV \phi\rV}\ket{\phi}.
	\end{equation}
	In order to obtain $\ket{\phi}$ one simply increases $d$ to $dn$ for large enough $n$ and traces out the $\GHZ$ states not used for teleportation:
	\begin{equation}
	\ket{\GHZ_{dn}}\otimes \ket{\psi}=\ket{\GHZ_{n}}\otimes\ket{\GHZ_{d}}\otimes \ket{\psi}\loccto \frac{\lV\psi\rV}{\lV \phi\rV}\ket{\GHZ_{n}}\otimes\ket{\phi}\loccto 2^{n/2}\frac{\lV\psi\rV}{\lV \phi\rV}\ket{\phi}.
	\end{equation}
	And for $n>2\log\frac{\lV \phi\rV}{\lV \psi\rV}$
	\begin{equation}
		2^{n/2}\frac{\lVert\psi\rVert}{\lVert \phi\rVert}\ket{\phi}\loccto \ket{\phi}.
	\end{equation}
\end{proof}
Theorem \ref{Stone} now applies to $\cS_k$.
\begin{theorem}\label{mainTheorem1}
	Let $\Delta(S_k)$ be the set of order preserving semiring homomorphisms $\cS_k\to \R^+$. Then
	\begin{equation}
		\left[\ket{\psi}\right]\gtrsim \left[\ket{\phi}\right] \iff \forall f\in \Delta(\cS_k): f\left(\ket{\psi}\right)\ge f\left(\ket{\phi}\right).
	\end{equation}
We call $\Delta(\cS_k)$ the asymptotic LOCC spectrum.
\end{theorem}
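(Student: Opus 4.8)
The plan is to obtain Theorem \ref{mainTheorem1} as an immediate application of Theorem \ref{Stone} to $\cS_k$, since every hypothesis of that theorem has been assembled in the preceding part of this section. Concretely, I would first recall that $(\cS_k,\oplus,\otimes)$ is a commutative semiring with multiplicative unit $[\ket{0\cdots0}]$ and additive unit $[0]$. Second, the preorder $\ge$ induced by LOCC convertibility respects this structure: compatibility with products, (\ref{OrderedSemiringproduct}), is immediate because tensoring an LOCC protocol with the identity channel on an auxiliary state is again an LOCC protocol, and compatibility with sums, (\ref{OrderedSemiringsum}), follows from Proposition \ref{DirectSumWithFixed} via Remark \ref{remarkSemiring} (using Proposition \ref{SimpleLOCCprotocols} to reduce to remembering protocols). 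Hence $(\cS_k,\oplus,\otimes,\ge)$ is a preordered semiring. Third, $\N$ embeds into $\cS_k$ via $d\mapsto[\ket{\GHZ_d}]$, and the induced order on this copy of $\N$ is the standard one, because $\ket{\GHZ_{d_1}}\loccto\ket{\GHZ_{d_2}}$ exactly when $d_1\ge d_2$; this is condition \ref{item1}. Fourth, the Proposition immediately preceding the theorem (the GHZ-padding plus teleportation argument) gives, for any nonzero $\ket\psi,\ket\phi$, a $d\in\N$ with $\ket{\GHZ_d}\otimes\ket\psi\loccto\ket\phi$, that is, $d\cdot[\ket\psi]\ge[\ket\phi]$, which is condition \ref{item2}.

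The one point requiring care is the direction of the preorder: Theorem \ref{Stone} is phrased for a preorder $\le$ with asymptotic version $\lesssim$, whereas the relation natural here is $\ge$ (with $[\ket\psi]\ge[\ket\phi]$ meaning $\ket\psi\loccto\ket\phi$). I would therefore apply Theorem \ref{Stone} to $(\cS_k,\le)$ where $\le$ denotes the opposite relation of $\ge$; the four verifications above are exactly the hypotheses of the theorem for this $\le$. Order-preserving semiring homomorphisms $\cS_k\to\R^+$ for $\le$ and for $\ge$ are literally the same set, so the $\Delta(\cS_k)$ of the statement agrees with the $\Delta(\cS)$ produced by Theorem \ref{Stone}, and the asymptotic LOCC preorder $\gtrsim$ is by definition the opposite of the asymptotic preorder $\lesssim$ attached to $\le$. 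Then (\ref{equation6}) reads
\[
[\ket\phi]\lesssim[\ket\psi]\iff\forall f\in\Delta(\cS_k):f([\ket\phi])\le f([\ket\psi]),
\]
which is precisely the asserted equivalence $[\ket\psi]\gtrsim[\ket\phi]\iff\forall f\in\Delta(\cS_k):f([\ket\psi])\ge f([\ket\phi])$.

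There is no substantial obstacle left; all the real content lives in Propositions \ref{SimpleLOCCprotocols} and \ref{DirectSumWithFixed} and in the GHZ/teleportation argument. If anything, the step most worth spelling out is checking that ``$\N\subset\cS_k$'' holds in the strong sense Theorem \ref{Stone} needs rather than as a mere set inclusion, that is, that $d\mapsto[\ket{\GHZ_d}]$ is compatible with both semiring operations as well as with the order; this reduces to the elementary local-unitary isomorphisms $\ket{\GHZ_m}\oplus\ket{\GHZ_n}\cong\ket{\GHZ_{m+n}}$ and $\ket{\GHZ_m}\otimes\ket{\GHZ_n}\cong\ket{\GHZ_{mn}}$ together with the already-noted equivalence $\ket{\GHZ_{d_1}}\loccto\ket{\GHZ_{d_2}}\iff d_1\ge d_2$. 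With those in hand, Theorem \ref{Stone} applies verbatim and yields the claim.
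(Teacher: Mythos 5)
Your proposal is correct and follows exactly the paper's own approach: the paper's "proof" of Theorem \ref{mainTheorem1} is the single sentence "Theorem \ref{Stone} now applies to $\cS_k$," with all the verifications you enumerate (the preordered-semiring structure via Propositions \ref{SimpleLOCCprotocols} and \ref{DirectSumWithFixed}, the embedding of $\N$ via GHZ states, and the GHZ-padding/teleportation argument for the dominance condition) carried out in the preceding part of that section. Your extra remarks on the orientation of the preorder and on checking that $d\mapsto[\ket{\GHZ_d}]$ is a genuine preordered-semiring embedding rather than a mere set inclusion are accurate bookkeeping that the paper leaves implicit.
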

Concretely $[\ket{\psi}]\gtrsim[\ket{\phi}]$ means that
\begin{equation}
\ket{\operatorname{GHZ}_2}^{\otimes o(n)}\otimes \ket{\psi}^{\otimes n}\loccto\ket{\phi}^{\otimes n},
\end{equation}
where $\ket{\GHZ_2}=\ket{0\ldots 0}+\ket{1\ldots 1}$ is the unnormalized two-level GHZ state. In other words; to extract $n$ copies of $\ket{\psi}$, we need $n$ copies of $\ket{\phi}$, a proportionally vanishing number of GHZ states and the success probability decays as $2^{n(\log\lV \psi\rV^2-\log\lV \phi\rV^2)+o(n)}$.
Since we only need a proportionally vanishing amount of GHZ states we may, assuming that $\ket{\psi}$ is globally entangled, i.e. that tracing out any number of subsystems always leaves a mixed state, extract these $\GHZ$ states from $\ket{\phi}^{\otimes n}$ without further cost in the asymptotic limit. Indeed, one can show that when $\ket{\psi}$ is globally entangled, $\ket{\psi}^{\otimes k}\loccto x\ket{\GHZ_2}$ for some $x>0$. 
\\\\
That is, for any globally entangled $\ket{\psi}$
\[
E^*(r,\psi,\phi) = \sup\left\{\tau\in \R^+\big\lv 2^{nr/2+o(n)}\ket{\psi}^{\otimes n}\loccto\ket{\phi}^{\otimes \floor{\tau n}} \text{ for sufficiently large } n \right\}.
\] 
This implies
\begin{equation}\label{equation4}
E^*(r,\psi,\phi) = \sup\left\{\tau\in \R^+\big\lv \forall f\in \Delta(\cS_k): f(2^{r/2}\ket{\psi})\ge f(\ket{\phi})^\tau \right\}.
\end{equation}
The formula also holds for states that are not globally entangled, but the argument gets somewhat lengthy.
\section{Spectral points}\label{UniversalPointsSection}
The goal of this section is to prove Theorem \ref{UniversalPointsTheorem}, which establishes a condition for a semiring homomorphism $f:\cS_k\to \R^+$ to be monotone (i.e. order preserving) and hence define a point in $\Delta(\cS_k)$. 
\begin{theorem}\label{UniversalPointsTheorem}
	Let $f:\cS_k\to \R^+$ be a semiring homomorphism. Then f is monotone if and only if there is an $\alpha\in [0,1]$ such that $f(\sqrt{p}\ket{0\ldots 0}) = p^\alpha$ and
	\begin{equation}\label{SpectrumInequality}
	f(\ket{\phi}) \ge \left(f\big((\Pi)_i\ket{\phi}\big)^{1/\alpha}+f\big((I-\Pi)_i\ket{\phi}\big)^{1/\alpha} \right)^\alpha
	\end{equation}
	for any $\ket{\phi}\in \cH_1\otimes\cdots\otimes\cH_k$, $i\in\{1,\ldots,k\}$ and orthogonal projection $\Pi\in \cB(\cH_i)$.
\end{theorem}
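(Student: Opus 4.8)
\emph{Strategy.} The statement is an equivalence, so I would prove the two implications separately. Throughout I would use that a semiring homomorphism satisfies $f(\ket{0\dots0})=f(1)=1$ and $f(\ket{\psi}\oplus\ket{\chi})=f(\ket\psi)+f(\ket\chi)$, $f(\ket\psi\otimes\ket\chi)=f(\ket\psi)f(\ket\chi)$, and that monotone means $\ket\psi\loccto\ket\phi\implies f(\ket\psi)\ge f(\ket\phi)$.

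\emph{Necessity.} Assume $f\in\Delta(\cS_k)$. On the states $\sqrt p\ket{0\dots0}$ the identity $\sqrt p\ket{0\dots0}\otimes\sqrt q\ket{0\dots0}=\sqrt{pq}\ket{0\dots0}$ makes $g(p):=f(\sqrt p\ket{0\dots0})$ multiplicative with $g(1)=1$, and $\sqrt q\ket{0\dots0}\loccto\sqrt p\ket{0\dots0}$ for $p\le q$ (rescale by a single $1\times1$ Kraus operator $\sqrt{p/q}$) makes $g$ nondecreasing; a nondecreasing multiplicative function on $(0,\infty)$ normalised at $1$ is $p\mapsto p^\alpha$ for a unique $\alpha\ge 0$ (after taking logarithms this is a monotone solution of Cauchy's equation, hence linear, no measurability needed). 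I would also record $f(c\ket\psi)=c^{2\alpha}f(\ket\psi)$, by writing $c\ket\psi=(\sqrt{c^2}\ket{0\dots0})\otimes\ket\psi$. For (\ref{SpectrumInequality}), fix $\ket\phi,i,\Pi$ with $a:=\lVert(\Pi)_i\ket\phi\rVert^2>0$ and $b:=\lVert(I-\Pi)_i\ket\phi\rVert^2>0$ (the degenerate cases being equalities), put $\ket{\phi_0}=(\Pi)_i\ket\phi/\sqrt a$, $\ket{\phi_1}=(I-\Pi)_i\ket\phi/\sqrt b$. For every $n$ and $0\le m\le n$ there is an LOCC protocol on $\ket\phi^{\otimes n}$ — party $i$ measures $\{\Pi,I-\Pi\}$ in each copy, one keeps only the $\binom nm$ outcome strings of Hamming weight $m$ and permutes the copies locally into a fixed order — realising
\[
\ket\phi^{\otimes n}\loccto\sqrt{\binom nm\,a^{\,n-m}b^{\,m}}\;\ket{\phi_0}^{\otimes(n-m)}\otimes\ket{\phi_1}^{\otimes m}.
\]
Applying $f$ with $A:=f((\Pi)_i\ket\phi)=a^{\alpha}f(\ket{\phi_0})$ and $B:=f((I-\Pi)_i\ket\phi)=b^{\alpha}f(\ket{\phi_1})$ gives $f(\ket\phi)^n\ge\binom nm^{\alpha}A^{\,n-m}B^{\,m}$ for all $m$; taking $m=\lfloor tn\rfloor$, $\binom nm=2^{nh(t)+o(n)}$ with $h$ the binary entropy, extracting $n$-th roots and letting $n\to\infty$ yields $\log f(\ket\phi)\ge\alpha h(t)+(1-t)\log A+t\log B$ for all $t\in[0,1]$, whose supremum over $t$ (maximiser $t=B^{1/\alpha}/(A^{1/\alpha}+B^{1/\alpha})$) equals $\alpha\log(A^{1/\alpha}+B^{1/\alpha})$ — this is (\ref{SpectrumInequality}), with the convention that for $\alpha=0$ the right-hand side is $\max(A,B)$ and the same computation applies. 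Finally, applying (\ref{SpectrumInequality}) to $\ket{\GHZ_2}=\ket{0\dots0}\oplus\ket{0\dots0}$ with $\Pi=\ketbra{0}{0}$ gives $2=f(\ket{\GHZ_2})\ge 2^{\alpha}$, so $\alpha\le 1$.

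\emph{Sufficiency.} Conversely, let $f$ be a homomorphism of the stated form and suppose $\ket\psi\loccto\ket\phi$. By Proposition~\ref{SimpleLOCCprotocols} the protocol is a composition of remembering one-step channels followed by tracing out the register, so I would track the functional $\Phi(\rho)=\sum_x f(\ket{\phi_x})^{1/\alpha}$ (read as $\max_x f(\ket{\phi_x})$ when $\alpha=0$) on conditionally pure states $\rho=\sum_x\ketbra{\phi_x}{\phi_x}\otimes\ketbra{x}{x}$. On the initial pure state $\Phi=f(\ket\psi)^{1/\alpha}$; on the conditionally pure state produced just before the final trace, whose pieces $\ket{\chi_y}$ satisfy $\sum_y\ketbra{\chi_y}{\chi_y}=\ketbra{\phi}{\phi}$, hence $\ket{\chi_y}=\sqrt{c_y}e^{i\theta_y}\ket\phi$ with $\sum_y c_y=1$, one gets $\Phi=\sum_y c_y f(\ket\phi)^{1/\alpha}=f(\ket\phi)^{1/\alpha}$. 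It remains to check $\Phi$ is nonincreasing under a remembering one-step channel; writing its Kraus operators as $(K_y)_{y\in\cY}$ on party $i$ with index map $\pi:\cY\to\cX$, so that $\ket{\phi_x}\mapsto((K_y)_i\ket{\phi_x})_{y\in\pi^{-1}(x)}$ with $\sum_{y\in\pi^{-1}(x)}K_y^*K_y\le I$, this reduces to the pointwise bound
\[
f(\ket\chi)^{1/\alpha}\ \ge\ \sum_{y}f\bigl((K_y)_i\ket\chi\bigr)^{1/\alpha}\qquad\text{whenever}\quad\textstyle\sum_y K_y^*K_y\le I .
\]
To prove it, extend the family by one Kraus operator so that $\sum_y K_y^*K_y=I$ (this only adds a nonnegative term that we discard), form the local isometry $V_i$ with $V=\sum_y K_y\otimes\ket y$, so $f(\ket\chi)=f(V_i\ket\chi)$ and $V_i\ket\chi=\sum_y((K_y)_i\ket\chi)\otimes\ket y$, and induct on the number of blocks using (\ref{SpectrumInequality}) over the orthogonal projections $I\otimes\ketbra{y}{y}$ to get $f(V_i\ket\chi)\ge\bigl(\sum_y f((K_y)_i\ket\chi)^{1/\alpha}\bigr)^{\alpha}$; raising to the power $1/\alpha$ finishes the step (with the evident $\max$-version for $\alpha=0$). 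Chaining the inequalities and raising to the power $\alpha>0$ gives $f(\ket\psi)\ge f(\ket\phi)$.

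\emph{Main obstacle.} The crux is the necessity of (\ref{SpectrumInequality}): a single measurement only delivers $f(\ket\phi)\ge\max\bigl(f((\Pi)_i\ket\phi),f((I-\Pi)_i\ket\phi)\bigr)$, and the stronger $\ell^{1/\alpha}$-type bound appears only in the many-copy regime, where the type-count $\binom nm$ enters $f$ with exponent $\alpha$; matching this with the one-variable entropy optimisation is what pins down the exact form of the inequality, and the reciprocal exponent $1/\alpha$ is exactly what makes the correct potential on conditionally pure states (built from $f^{1/\alpha}$ rather than $f$) monotone under Kraus operations.
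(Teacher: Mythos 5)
Your proof is correct, and both directions rest on the same core ideas as the paper (type-class post-selection for necessity, dilation plus Proposition~\ref{SimpleLOCCprotocols} for sufficiency), but you have streamlined the argument in two ways worth noting. For necessity, the paper first defines an extension of $f$ to conditionally pure states, proves the extension is multiplicative and monotone under arbitrary remembering LOCC channels (Propositions~\ref{monotoneExtension} and the one preceding it, again via type classes), and only then specialises to the binary measurement $\{\Pi,I-\Pi\}$; you instead run the type-class/Hamming-weight argument directly on the two-outcome measurement, optimising over the weight fraction $t$ to produce the $\ell^{1/\alpha}$ inequality in one stroke, avoiding the extension machinery entirely. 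For sufficiency, the paper first upgrades \eqref{SpectrumInequality} from orthogonal projections to arbitrary contractions $A,B$ (Lemma~\ref{projectionLemma}, via a three-block dilation) and then inducts on the number of Kraus operators using the Moore--Penrose pseudoinverse; you perform a single dilation $V=\sum_y K_y\otimes\ket{y}$ to an isometry and then peel off the blocks one at a time using \eqref{SpectrumInequality} with the orthogonal projections $I\otimes\ketbra{y}{y}$, which makes the induction essentially bookkeeping. What the paper's version buys is the reusable extension of $f$ to conditionally pure states as an explicit semiring-homomorphism-like object; what yours buys is a shorter, more self-contained proof of exactly the theorem at hand. One small point: you should note, as you do in passing, that the degenerate cases $a=0$ or $b=0$ reduce \eqref{SpectrumInequality} to the equality $f(\ket\phi)=f(\ket\phi)$, since $f$ sends the additive identity to $0$.
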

\begin{proposition}\label{IntroduceAlpha}
	Let $f:\cS_k\to \R^+$ be a monotone semiring homomorphism. There is an $\alpha\ge0$ such that
	\begin{equation}
	f\left(\sqrt{p}\ket{\phi}\right)=p^\alpha f\left(\ket{\phi}\right)
	\end{equation}
	for each $\ket{\phi}$ and each $p>0$.
\end{proposition}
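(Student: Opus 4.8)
The plan is to separate out the ``scalar part'' of $f$ and then invoke the classical fact that a monotone multiplicative function on the positive reals is a power function.

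First I would note that for every $\ket{\phi}$ and every $p>0$ the rescaled state $\sqrt{p}\ket{\phi}$ is locally unitarily equivalent to $\ket{\phi}\otimes\big(\sqrt{p}\ket{0\ldots0}\big)$ under the canonical identifications $\cH_i\cong\cH_i\otimes\C$; since $f$ is a semiring homomorphism, hence multiplicative, this gives $f(\sqrt{p}\ket{\phi})=f(\ket{\phi})\cdot g(p)$, where $g(p):=f(\sqrt{p}\ket{0\ldots0})$. So it suffices to show that $g(p)=p^\alpha$ for some $\alpha\ge 0$.

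Next I would collect the elementary properties of $g\colon(0,\infty)\to\R^{+}$. From the identity $\sqrt{pq}\ket{0\ldots0}=\big(\sqrt{p}\ket{0\ldots0}\big)\otimes\big(\sqrt{q}\ket{0\ldots0}\big)$ and multiplicativity of $f$ we get $g(pq)=g(p)g(q)$, and in particular $g(1)=f(1)=1$. From monotonicity of $f$ we get that $g$ is non-decreasing: if $p\ge q$ then multiplying by the scalar $\sqrt{q/p}\le 1$ on one party is a trace-non-increasing local operation, so $\sqrt{p}\ket{0\ldots0}\loccto\sqrt{q}\ket{0\ldots0}$, whence $g(p)\ge g(q)$. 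Finally $g(p)>0$ for every $p>0$: by condition \ref{item2} of Theorem \ref{Stone} there is an integer $s$ with $1\le s\cdot[\sqrt{p}\ket{0\ldots0}]$ (concretely, $\ket{\GHZ_s}\otimes\sqrt{p}\ket{0\ldots0}\loccto\ket{0\ldots0}$ for any $s\ge 1/p$), so $1=f(1)\le s\,g(p)$.

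Then I would linearise by passing to logarithms: the function $h(t):=\log g(e^t)$ on $\R$ is additive by multiplicativity of $g$ and non-decreasing by monotonicity of $g$ and $\log$, hence $\Q$-linear and, being monotone, $\R$-linear; thus $h(t)=\alpha t$ with $\alpha=\log g(e)\ge\log g(1)=0$. Undoing the substitution yields $g(p)=p^\alpha$, and therefore $f(\sqrt{p}\ket{\phi})=p^\alpha f(\ket{\phi})$. The only genuinely load-bearing point is the strict positivity $g(p)>0$, which is needed before taking logarithms; everything else is bookkeeping with $\oplus$, $\otimes$ and the preorder. If one preferred to avoid invoking condition \ref{item2}, positivity can also be obtained directly: a single zero of $g$ would propagate by multiplicativity along the sequence $p_0^{1/2^n}\to 1$, and then monotonicity would force $g\equiv 0$ on $(0,1)$, contradicting $g(p)g(1/p)=1$.
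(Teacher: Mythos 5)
Your proof is correct and follows essentially the same route as the paper's: reduce to the scalar function $g(p)=f\big(\sqrt{p}\ket{0\ldots0}\big)$ via multiplicativity, then conclude $g(p)=p^\alpha$ and propagate this to general $\ket{\phi}$. The paper simply cites ``the solution to the Cauchy functional equation'' for the claim that a multiplicative, nondecreasing $g$ with $g(1)=1$ is a power function, whereas you spell out the underlying details (strict positivity, passing to logarithms, monotone additive $\Rightarrow$ linear); that is added exposition rather than a different argument.
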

\begin{proof}
	Since $p\mapsto f(\sqrt{p}\ket{0\ldots 0})$ is multiplicative, nondecreasing, sends $0$ to $0$ and $1$ to $1$, it follows from the solution to the Cauchy functional equation that
	\[
	f\left(\sqrt{p}\ket{0\ldots 0}\right) = p^\alpha
	\]
	for all $p>0$ and some $\alpha\ge0$. Therefore
	\begin{align*}
	f\left(\sqrt{p}\ket{\phi}\right)&=f\left(\sqrt{p}\ket{\phi}\otimes \ket{0\ldots 0}\right) = f\left(\ket{\phi}\right)f\left(\sqrt{p}\ket{ 0\ldots 0}\right) =p^\alpha f\left(\ket{\phi}\right) f\left(\ket{0\ldots 0}\right)
	\\
	&=p^\alpha f\left(\ket{\phi}\right).
	\end{align*}
\end{proof}
For the proof of Theorem \ref{UniversalPointsTheorem} we introduce the following extension of a monotone homomorphism $f:\cS_k\to \R^+$ to conditionally pure states. Given $f$ such that $f(\sqrt{p}\ket{0\ldots 0}) = p^\alpha$ for some $\alpha>0$, we define
\[
f\left(\sum_{x\in \cX}\ketbra{\phi_x}{\phi_x} \otimes \ketbra{x}{x} \right) = \left(\sum_{x\in \cX} f(\ket{\phi_x})^{1/\alpha}\right)^\alpha
\]
and if $\alpha=0$, we define
\[
f\left(\sum_{x\in \cX}\ketbra{\phi_x}{\phi_x} \otimes \ketbra{x}{x} \right) = \max_{x\in \cX} f\left(\ket{\phi_x}\right).
\]
\begin{proposition}
	The extension of $f$ is multiplicative under tensor product.
\end{proposition}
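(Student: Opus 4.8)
The plan is to verify multiplicativity directly on the extended definition, by checking that the tensor product of two conditionally pure states is again conditionally pure (with register the product set) and that the formula for $f$ on the product factorizes. First I would take two conditionally pure states $\rho = \sum_{x\in\cX}\ketbra{\phi_x}{\phi_x}\otimes\ketbra{x}{x}$ and $\sigma = \sum_{y\in\cY}\ketbra{\psi_y}{\psi_y}\otimes\ketbra{y}{y}$, and observe that, under the natural identification of registers $\C^\cX\otimes\C^\cY\simeq\C^{\cX\times\cY}$, one has
\[
\rho\otimes\sigma = \sum_{(x,y)\in\cX\times\cY}\ketbra{\phi_x\otimes\psi_y}{\phi_x\otimes\psi_y}\otimes\ketbra{(x,y)}{(x,y)},
\]
so $\rho\otimes\sigma$ is conditionally pure with register $\cX\times\cY$ and conditional pure states $\ket{\phi_x}\otimes\ket{\psi_y}$.

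Next I would apply the definition of the extension to $\rho\otimes\sigma$ and use multiplicativity of $f$ on honest pure states (which holds because $f:\cS_k\to\R^+$ is a semiring homomorphism, and $\ket{\phi_x\otimes\psi_y}$ is the tensor product of $\ket{\phi_x}$ and $\ket{\psi_y}$). In the case $\alpha>0$ this gives
\[
f(\rho\otimes\sigma)
=\Bigl(\sum_{x,y} f(\ket{\phi_x}\otimes\ket{\psi_y})^{1/\alpha}\Bigr)^{\alpha}
=\Bigl(\sum_{x,y} f(\ket{\phi_x})^{1/\alpha}f(\ket{\psi_y})^{1/\alpha}\Bigr)^{\alpha},
\]
and the double sum factors as $\bigl(\sum_x f(\ket{\phi_x})^{1/\alpha}\bigr)\bigl(\sum_y f(\ket{\psi_y})^{1/\alpha}\bigr)$, so raising to the power $\alpha$ yields $f(\rho)f(\sigma)$ exactly. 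In the case $\alpha=0$ the same computation with $\max$ in place of the $\ell^{1/\alpha}$-sum works, since $\max_{x,y} f(\ket{\phi_x})f(\ket{\psi_y}) = \bigl(\max_x f(\ket{\phi_x})\bigr)\bigl(\max_y f(\ket{\psi_y})\bigr)$ as all the quantities are nonnegative reals. One should also note consistency with the embedding of ordinary pure states as conditionally pure states over a one-point register, and that the formula is independent of the chosen decomposition into conditional pure states up to local unitaries (which is the content of the extension being well-defined, presumably handled where the extension is introduced).

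The only genuinely delicate point is the identification of registers and the claim that $\ket{\phi_x}\otimes\ket{\psi_y}$ is the correct conditional vector in the product: one must make sure the diagonal-matrix-algebra tensor factor $\Diag(\C^\cX)\otimes\Diag(\C^\cY)\simeq\Diag(\C^{\cX\times\cY})$ is used consistently with the definition of the tensor product of $k$-partite state spaces, so that $\ketbra{x}{x}\otimes\ketbra{y}{y}$ is identified with $\ketbra{(x,y)}{(x,y)}$. Once that bookkeeping is in place the argument is a one-line algebraic factorization, so I do not expect any substantive obstacle beyond this identification; the main thing to be careful about is that no cross terms $\ketbra{\phi_x\otimes\psi_{y'}}{\phi_{x'}\otimes\psi_y}$ with $(x,y)\neq(x',y')$ appear, which is guaranteed by the diagonal structure of the register factors.
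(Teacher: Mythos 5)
Your argument is the same as the paper's: identify $\rho\otimes\sigma$ as a conditionally pure state over the product register $\cX\times\cY$ with conditional vectors $\ket{\phi_x}\otimes\ket{\psi_y}$, apply the definition of the extension, use multiplicativity of $f$ on pure states, and factor the double sum (or double max when $\alpha=0$). The extra remarks about register bookkeeping and well-definedness are sound but not substantively different from what the paper does implicitly.
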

\begin{proof}
	For $\alpha>0$
	\begin{equation}
	\begin{split}
	&f\left(\left(\sum_{x\in \cX}\ketbra{\phi_x}{\phi_x}\otimes\ketbra{x}{x} \right)
	\otimes 
	\left(\sum_{y\in \cY}\ketbra{\psi_x}{\psi_x}\otimes\ketbra{x}{x} \right)
	\right)
	\\
	&=
	f\left(\sum_{\stackrel{x\in \cX }{y\in \cY}}\left(\ketbra{\phi_x}{\phi_x}\otimes\ketbra{\psi_y}{\psi_y}\right)\otimes\ketbra{xy}{xy} \right)
	\\
	&=
	\left(
	\sum_{\stackrel{x\in \cX }{y\in \cY}}
	f(\ket{\phi_x}\otimes\ket{\psi_y})^{1/\alpha}
	\right)^{\alpha}
	=
	\left(
	\sum_{\stackrel{x\in \cX }{y\in \cY}}
	f(\ket{\phi_x})^{1/\alpha}f(\ket{\psi_y})^{1/\alpha}
	\right)^{\alpha}
	\\
	&=
	\left(
	\sum_{x\in \cX}
	f(\ket{\phi_x})^{1/\alpha}\sum_{y\in \cY}f(\ket{\psi_y})^{1/\alpha}
	\right)^{\alpha}
	=
	\left(
	\sum_{x\in \cX}
	f(\ket{\phi_x})^{1/\alpha}
	\right)^\alpha\left(\sum_{y\in J}f(\ket{\psi_y})^{1/\alpha}
	\right)^{\alpha}
	\\
	&=
	f\left(\sum_{x\in \cX}\ketbra{\phi_x}{\phi_x}\otimes\ketbra{x}{x} \right) 
	f\left(\sum_{y\in \cY}\ketbra{\phi_y}{\phi_y}\otimes\ketbra{y}{y} \right).
	\end{split}
	\end{equation}
	If $\alpha=0$, then
	\begin{equation}
	\begin{split}
		&
		f\left(\left(\sum_{x\in \cX}\ketbra{\phi_x}{\phi_x}\otimes\ketbra{x}{x} \right)
		\otimes 
		\left(\sum_{y\in \cY}\ketbra{\phi_y}{\phi_y}\otimes\ketbra{y}{y} \right)
		\right)
		\\
		=&
		\max_{\stackrel{x\in \cX}{y\in \cY}}
		f\left(\ket{\phi_x}\right)f\left(\ket{\psi_y}\right)
		=
		\max_{x\in \cX}f\left(\ket{\phi_x}\right)\max_{y\in \cY}f\left(\ket{\psi_y}\right)
		\\
		=&
		f\left(\sum_{x\in \cX}\ketbra{\phi_x}{\phi_x}\otimes\ketbra{x}{x} \right) 
		f\left(\sum_{y\in \cY}\ketbra{\phi_y}{\phi_y}\otimes\ketbra{y}{y} \right).
	\end{split}
	\end{equation}
\end{proof}
The following theorem is true for general LOCC channels, but since it is not needed in full generality, we shall prove it only for remembering LOCC channels.
\begin{proposition}\label{monotoneExtension}
	If $f$ is monotone, then the extension is monotone under remembering LOCC channels.
\end{proposition}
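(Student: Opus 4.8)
The plan is to reduce the proposition to one inequality about pure states and then prove that inequality by an amortization argument that uses only monotonicity of $f$ on pure states (which is given).

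Let $\alpha\ge 0$ be the exponent supplied by Proposition~\ref{IntroduceAlpha}, so $f(\sqrt p\ket{0\ldots 0})=p^\alpha$, and suppose first that $\alpha>0$. A remembering one-step LOCC channel $\Lambda$, with Kraus operators $(K_y)_{y\in\cY}$ on system $i$ and index map $\iota\colon\cY\to\cX$, sends a conditionally pure state $\rho=\sum_{x\in\cX}\ketbra{\phi_x}{\phi_x}\otimes\ketbra{x}{x}$ to the conditionally pure state $\Lambda(\rho)=\sum_{y\in\cY}\ketbra{(K_y)_i\phi_{\iota(y)}}{(K_y)_i\phi_{\iota(y)}}\otimes\ketbra{y}{y}$, and the Kraus constraint \eqref{KrausConstraint} restricts, for each fixed $x$, to $\sum_{y\in\iota^{-1}(x)}K_y^*K_y\le I_{\cH_i}$. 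By the definition of the extension,
\[
f(\Lambda(\rho))=\Bigl(\sum_{x\in\cX}\ \sum_{y\in\iota^{-1}(x)} f\bigl((K_y)_i\ket{\phi_x}\bigr)^{1/\alpha}\Bigr)^{\alpha},
\]
so $f(\Lambda(\rho))\le f(\rho)$ will follow once we show that for every $\ket{\phi}\in\cH_1\otimes\cdots\otimes\cH_k$, every $i$, and every finite family $(K_j\colon\cH_i\to\cH_i')_j$ with $\sum_j K_j^*K_j\le I_{\cH_i}$,
\begin{equation}\label{eq:splitplan}
f(\ket{\phi})^{1/\alpha}\ \ge\ \sum_j f\bigl((K_j)_i\ket{\phi}\bigr)^{1/\alpha}.
\end{equation}
Monotonicity under a remembering \emph{protocol} then follows by composing the one-step bounds, each intermediate state being again conditionally pure. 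If instead $\alpha=0$, the extension takes the maximum over branches and the claim is immediate: $K_j^*K_j\le I$ gives $\ket{\phi}\loccto(K_j)_i\ket{\phi}$, so monotonicity of $f$ on pure states already yields $f((K_j)_i\ket{\phi})\le f(\ket{\phi})$.

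To prove \eqref{eq:splitplan} I would argue asymptotically. Fix $n$ and apply, on system $i$ of $\ket{\phi}^{\otimes n}$, the product instrument with Kraus operators $K_{j_1}\otimes\cdots\otimes K_{j_n}$ for $\vec{\jmath}\in J^n$; this is a legal local operation because $\sum_{\vec{\jmath}}(K_{j_1}\otimes\cdots\otimes K_{j_n})^*(K_{j_1}\otimes\cdots\otimes K_{j_n})=\bigl(\sum_j K_j^*K_j\bigr)^{\otimes n}\le I$. Store $\vec{\jmath}$ in the register; then, using that $\ket{\phi}^{\otimes n}$ is invariant under the simultaneous permutation of the $n$ copies by all parties (a local unitary, applied one party at a time while $\vec{\jmath}$ is still stored), sort the copies so that the branch $\vec{\jmath}$ becomes the vector $\bigotimes_j\bigl((K_j)_i\ket{\phi}\bigr)^{\otimes\lambda_j}$, which depends only on the type $\lambda=\type(\vec{\jmath})$, and relabel the register by $\lambda$. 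Keeping only one chosen type $\lambda$ and tracing out the register (both legitimate LOCC steps) gives
\[
\ket{\phi}^{\otimes n}\ \loccto\ \sqrt{\tbinom{n}{\lambda}}\;\bigotimes_j\bigl((K_j)_i\ket{\phi}\bigr)^{\otimes\lambda_j}.
\]
Applying the multiplicative monotone homomorphism $f$ together with Proposition~\ref{IntroduceAlpha} yields $f(\ket{\phi})^n\ge\binom{n}{\lambda}^{\alpha}\prod_j f\bigl((K_j)_i\ket{\phi}\bigr)^{\lambda_j}$, where $\binom n\lambda$ is the multinomial coefficient. Choosing $\lambda=\lambda^{(n)}$ with $\lambda^{(n)}_j/n\to q_j$ for an arbitrary probability vector $q$ supported on the indices $j$ with $(K_j)_i\ket{\phi}\ne 0$, taking $n$-th roots, letting $n\to\infty$, and using that $f(\ket{\psi})>0$ for nonzero pure states (a consequence of conditions \ref{item1}–\ref{item2} and monotonicity), one gets $\log f(\ket{\phi})\ge \alpha H(q)+\sum_j q_j\log f\bigl((K_j)_i\ket{\phi}\bigr)$ with $H$ the binary entropy. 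Maximizing the right-hand side over $q$ by the Gibbs variational principle produces exactly $\alpha\log\sum_j f\bigl((K_j)_i\ket{\phi}\bigr)^{1/\alpha}$, which is \eqref{eq:splitplan}.

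The routine parts are the reduction, the $\alpha=0$ case, the optimization over $q$, and the passage from one step to a protocol. The main obstacle is the amortization step: one must carefully present ``apply the product instrument, sort the copies by outcome type, discard all but one type'' as an honest finite composition of one-step LOCC channels (in particular, carrying out the sorting one party at a time while the full outcome string is still in the register), and verify the type/entropy asymptotics $\tfrac1n\log\binom{n}{\lambda^{(n)}}\to H(q)$.
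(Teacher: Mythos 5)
Your proof is correct and follows essentially the same route as the paper: reduce to a single‐step inequality for pure states, amortize over $n$ copies, project onto a single type class, use the exponential cardinality of the type class, and optimize the resulting entropy bound over the type distribution. The paper phrases the output branches as normalized vectors $\ket{\phi_i}$ with probabilities $P(i)$ and plugs in the tilted distribution $P_\phi(i)\propto P(i)f(\ket{\phi_i})^{1/\alpha}$ directly, whereas you keep the unnormalized vectors $(K_j)_i\ket{\phi}$ and invoke the Gibbs variational principle to find the optimal $q$; these are the same optimization in two notations. Your reduction to conditionally pure states (restrict the instrument to each fixed register value) and the $\alpha=0$ case also match the paper.
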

\begin{proof}
	First assume $\alpha>0$ and start with the case where the initial state is pure:
	\begin{equation}
		\ketbra{\psi}{\psi}\loccto \sum_{i\in I}P(i)\ketbra{\phi_i}{\phi_i}\otimes \ketbra{i}{i}.
	\end{equation}
	Here the $\ket{\phi_i}$'s are normalized and $P:I\to \R^+$ is a map. 
	\\
	Given $n\in \N$ we say that a probability measure $Q:I\to \R^+$ is an $n$-type, if $nQ(i)\in \N$ for each $i\in \N$. Given an $n$-type $Q$, we say that a sequence in $I^{n}$ is of type $Q$, if $i$ appears $nQ(i)$ times. The type class $T^n_Q\subset I^n$ is the set of sequences of type $Q$. Given any $n$-type $Q$:
	\begin{equation}\label{equation7}
	\begin{split}
		\ketbra{\psi}{\psi}^{\otimes n} &\loccto 
		\left(\sum_{i\in I}P(i)\ketbra{\phi_i}{\phi_i}\otimes \ketbra{i}{i}\right)^{\otimes n}
		\\
		&=
		\sum_{a\in I^n}\prod_{j=1}^n P(a_j)\bigotimes_{j=1}^n \ketbra{\phi_{a_j}}{\phi_{a_j}}\otimes \ketbra{a}{a}
		\\
		&\loccto
		\lv T^n_Q\lv 2^{-n(H(Q)+D(Q\lv\lv P))}\bigotimes_{i\in I}\ketbra{\phi_i}{\phi_i}^{\otimes nQ(i)}.
	\end{split}
	\end{equation}
	The last LOCC transformation is the projection onto the multiindices of type $Q$ followed by a unitary reshuffling of indicies and a partial trace on the classical register. $H(Q)=-\sum_i Q(i)\log Q(i)$ is the Shannon entropy of $Q$ and $D(Q\lv\lv P)=\sum Q(i)\log\frac{Q(i)}{P(i)}$ is the relative entropy. Since the last expression in (\ref{equation7}) is a pure state we can apply monotonicity of $f$ on pure states to get
	\begin{equation}
		f(\ket{\psi})^{n}\ge \bigg(\lv T^n_Q\lv 2^{-n(H(Q)+D(Q||P))}\bigg)^\alpha\prod_{i\in I}f(\ket{\phi_i})^{nQ(i)}.
	\end{equation}
	Since $\lv T_Q^n\rv \ge 2^{nH(Q)-|I|\log(n+1)}$ \cite[Lemma 4]{MR1960075}, this implies, by taking the $n$-th root;
	\begin{equation}\label{equation1}
		f(\ket{\psi})\ge
		\Big(2^{-D(Q\lv \lv P)+\sum_i Q(i)\log f(\ket{\phi_i})^{1/\alpha}}\Big)^\alpha 2^{-\alpha|I|\frac{\log(n+1)}{n}}.
	\end{equation}
	Let $Z=\sum_{i\in I} P(i) f(\ket{\phi_i})^{1/\alpha}$ and let $P_\phi$ be the probability distribution $P_\phi(i)=\frac{P(i) f(\ket{\phi_i})^{1/\alpha}}{Z}$. Then
	\begin{equation}\label{equation2}
		-D\left(Q\lVert P\right) + \sum_i Q(i)\log f\left(\ket{\phi_i}\right)^{1/\alpha} 
		= -D\left(Q\lV ZP_\phi\right) 
		= \log Z - D\left(Q\lV P_\phi\right).
	\end{equation}
	Using (\ref{equation2}), (\ref{equation1}) becomes
	\begin{equation}\label{equation3}
		f\left(\ket{\psi}\right) \ge 2^{\left(\log Z - D(Q\lV P_\phi)\right)\alpha}2^{-\alpha|I|\frac{\log(n+1)}{n}}.
	\end{equation}
	For each $n\in \N$, let $Q_n$ be an $n$-type with $\supp Q_n=\supp P_\phi$ such that $\lim_n Q_n = P_\phi$. Then $D(Q_n\lV P_\phi)\to D(P_\phi\lV P_\phi)=0$. Inserting $Q_n$ in (\ref{equation3}) and letting $n\to \infty$ yields
	\begin{equation}
		f(\ket{\psi})\ge Z^{\alpha} = \bigg[\sum_{i\in I} P(i) f(\ket{\phi_i})^{1/\alpha}\bigg]^{\alpha} = f\bigg( \sum_{i\in I}P(i)\ketbra{\phi_i}{\phi_i}\otimes \ketbra{i}{i}\bigg),
	\end{equation}
	showing that the extension is monotone under remembering one-step LOCC channels applied to pure states.
	We use this result to generalize to remembering one-step LOCC channels on conditionally pure states:
	\begin{equation}
		\sum_{j\in J}\ketbra{\psi_j}{\psi_j}\otimes\ketbra{j}{j} \loccto
		\sum_{j\in J}\sum_{i\in I_j}\ketbra{\phi_{i,j}}{\phi_{i,j}} \otimes \ketbra{ij}{ij}.
	\end{equation}
	By restricting the protocol to only the Kraus operators acting on $\ket{\psi_j}$ one gets
	\begin{equation}
		\ketbra{\psi_j}{\psi_j}\loccto\sum_{i\in I_j}\ketbra{\phi_{i,j}}{\phi_{i,j}}\otimes\ketbra{i}{i}.	
	\end{equation}
	Therefore
	\begin{align*}
		f\bigg( \sum_{j\in J}\ketbra{\psi_j}{\psi_j} \otimes\ketbra{j}{j} \bigg)
		&=
		\bigg(\sum_{j\in J} f(\ket{\psi_j})^{1/\alpha} \bigg)^\alpha
		\\
		&\ge
		\bigg(\sum_{j\in J}\sum_{i\in I_j} f(\ket{\phi_{i,j}})^{1/\alpha} \bigg)^\alpha
		\\
		&=
		f\bigg( \sum_{j\in J}\sum_{i\in I_j}\ketbra{\phi_{i,j}}{\phi_{i,j}}\otimes\ketbra{ij}{ij} \bigg).
	\end{align*}
	For the case $\alpha=0$, note that 
	\begin{equation}
	\ketbra{\psi_j}{\psi_j}\loccto\sum_{i\in I_j}\ketbra{\phi_{i,j}}{\phi_{i,j}}\otimes\ketbra{i}{i}	
	\end{equation}
	implies $\ket{\psi_j}\loccto \ket{\phi_{i,j}}$ for each $i$, which by monotonicity of $f$ on pure states implies $f(\ket{\psi_j})\ge \max_i f(\ket{\phi_{i,j}})$. Therefore
	\begin{align*}
	f\bigg( \sum_{j\in J}\ketbra{\psi_j}{\psi_j} \otimes\ketbra{j}{j} \bigg)
	&=
	\max_{j} f(\ket{\psi_j})
	\\
	&\ge
	\max_{i,j} f(\ket{\phi_{i,j}})
	\\
	&=
	f\bigg( \sum_{j\in J}\sum_{i\in I_j}\ketbra{\phi_{i,j}}{\phi_{i,j}}\otimes\ketbra{ij}{ij} \bigg).
	\end{align*}
\end{proof}
\begin{lemma}\label{projectionLemma}
	Let $f:\cS_k\to \R^+$ be a semiring homomorphism with $f(\sqrt{p}\ket{0\ldots 0})=p^\alpha$ for some $\alpha\in[0,1]$ which satisfies (\ref{SpectrumInequality}) for any choice of pure state and orthogonal projection. Then
	\begin{equation}\label{SpectrumInequality2}
	f\big(\ket{\phi}\big) \ge \bigg(f\big((A)_i\ket{\phi}\big)^{1/\alpha}+f\big((B)_i\ket{\phi}\big)^{1/\alpha} \bigg)^\alpha
	\end{equation}
	for any $\ket{\phi}\in \cH_1\otimes\cdots\otimes\cH_k$, $i\in\{1,\ldots,k\}$ and $A,B\in \cB(\cH_i)$ with $A^*A+B^*B\le I$.
\end{lemma}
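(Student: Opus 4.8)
The plan is to deduce the general inequality \eqref{SpectrumInequality2} from the projection case \eqref{SpectrumInequality} by \emph{dilating} the contraction built from $A$ and $B$ to an isometry. Given $A,B\in\cB(\cH_i)$ with $A^*A+B^*B\le I$, put $C=(I-A^*A-B^*B)^{1/2}\in\cB(\cH_i)$, a well-defined positive operator, and let $\iota_1,\iota_2,\iota_3\colon\cH_i\to\cH_i\oplus\cH_i\oplus\cH_i$ be the three canonical inclusions, with $\Pi_j=\iota_j\iota_j^*$ the corresponding orthogonal projections. Define $V\colon\cH_i\to\cH_i^{\oplus 3}$ by $V=\iota_1 A+\iota_2 B+\iota_3 C$, i.e.\ $Vv=(Av,Bv,Cv)$. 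Since $V^*V=A^*A+B^*B+C^2=I$, the operator $V$ is a partial isometry with $(V^*)_i(V)_i\ket{\phi}=\ket{\phi}$, so $\ket{\phi}$ and $(V)_i\ket{\phi}$ are locally unitarily equivalent as elements of $\cS_k$, and hence $f(\ket{\phi})=f\big((V)_i\ket{\phi}\big)$.

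First I would apply the hypothesis \eqref{SpectrumInequality} to the pure state $(V)_i\ket{\phi}$ on $\cH_1\otimes\cdots\otimes\cH_i^{\oplus 3}\otimes\cdots\otimes\cH_k$ with the orthogonal projection $\Pi_1$. Using $I-\Pi_1=\Pi_2+\Pi_3$ together with $\Pi_1 V=\iota_1 A$ and $(I-\Pi_1)V=\iota_2 B+\iota_3 C$, this gives
\begin{equation*}
f(\ket{\phi})\ \ge\ \Big(f\big((\iota_1 A)_i\ket{\phi}\big)^{1/\alpha}+f\big((\iota_2 B+\iota_3 C)_i\ket{\phi}\big)^{1/\alpha}\Big)^{\alpha}.
\end{equation*}
Then I would apply \eqref{SpectrumInequality} a second time, now to the pure state $(\iota_2 B+\iota_3 C)_i\ket{\phi}$ with the projection $\Pi_2$ (so that $\Pi_2(\iota_2 B+\iota_3 C)=\iota_2 B$ and $(I-\Pi_2)(\iota_2 B+\iota_3 C)=\iota_3 C$), obtaining $f\big((\iota_2 B+\iota_3 C)_i\ket{\phi}\big)\ge\big(f((\iota_2 B)_i\ket{\phi})^{1/\alpha}+f((\iota_3 C)_i\ket{\phi})^{1/\alpha}\big)^{\alpha}$. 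Substituting this into the previous display, using that $t\mapsto t^{1/\alpha}$ and $t\mapsto t^{\alpha}$ are nondecreasing, and finally discarding the nonnegative term $f((\iota_3 C)_i\ket{\phi})^{1/\alpha}$ yields
\begin{equation*}
f(\ket{\phi})\ \ge\ \Big(f\big((\iota_1 A)_i\ket{\phi}\big)^{1/\alpha}+f\big((\iota_2 B)_i\ket{\phi}\big)^{1/\alpha}\Big)^{\alpha}.
\end{equation*}

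To conclude, I would observe that each inclusion $\iota_j$ (together with identities on the remaining factors) is a partial isometry witnessing a local unitary equivalence: since $\iota_1^*\iota_1=I$, the states $(\iota_1 A)_i\ket{\phi}$ and $(A)_i\ket{\phi}$ are locally unitarily equivalent, hence $f\big((\iota_1 A)_i\ket{\phi}\big)=f\big((A)_i\ket{\phi}\big)$, and likewise $f\big((\iota_2 B)_i\ket{\phi}\big)=f\big((B)_i\ket{\phi}\big)$; substituting these gives exactly \eqref{SpectrumInequality2}. The case $\alpha=0$ is identical, with each expression $(x^{1/\alpha}+y^{1/\alpha})^{\alpha}$ replaced by $\max(x,y)$ throughout. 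I do not expect a genuine obstacle: the only real idea is the dilation $V$ of the pair $(A,B)$ via the ``defect'' operator $C$, and the one point requiring care is the bookkeeping of local unitary equivalences needed to pass between $f$ evaluated on the embedded states over $\cH_i^{\oplus 3}$ and $f$ evaluated on $(A)_i\ket{\phi}$ and $(B)_i\ket{\phi}$; applying \eqref{SpectrumInequality} twice — once to split off $A$, once to separate $B$ from $C$ — then does the rest.
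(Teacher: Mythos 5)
Your argument is correct and follows essentially the same approach as the paper's: both dilate the pair $(A,B)$ to an isometry into $\cH_i^{\oplus 3}$ via the defect operator $C=\sqrt{I-A^*A-B^*B}$ and then invoke \eqref{SpectrumInequality} with the projection onto the first summand. The one place you are more explicit is the final step: the paper asserts $[(I-\Pi)_i\ket{\psi}]\ge[(B)_i\ket{\phi}]$ and then applies $f$, tacitly using that \eqref{SpectrumInequality} already gives monotonicity of $f$ under local projections (drop the nonnegative complementary term, using $\alpha\le 1$), whereas you spell this out as a second application of \eqref{SpectrumInequality} to $(\iota_2 B+\iota_3 C)_i\ket{\phi}$ with $\Pi_2$ — the same content, written out in full.
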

\begin{proof}
	Consider the operator $U=\begin{bmatrix}
	A\\
	B\\
	\sqrt{I-A^*A-B^*B}
	\end{bmatrix}:\cH_i\to\cH_i^3$.
	This is an isometry, so $f(\ket{\phi})=f(\ket{\psi})$, where $\ket{\psi}=(U)_i\ket{\phi}$. Let $\Pi:\cH_i^3\to \cH_i^3$ be the projection onto the first summand. Then $[(\Pi)_i\ket{\psi}]= [(A)_i\ket{\phi}]$ and $[(I-\Pi)_i\ket{\psi}]\ge [(B)_i\ket{\phi}]$, so
	\begin{equation}
	\begin{split}
		f\big(\ket{\phi}\big)=f(\ket{\psi})&\ge\bigg(f\big((\Pi)_i\ket{\psi}\big)^{1/\alpha}+f\big((I-\Pi)_i\ket{\psi}\big)^{1/\alpha} \bigg)^\alpha
		\\
		&\ge \bigg(f\big((A)_i\ket{\phi}\big)^{1/\alpha}+f\big((B)_i\ket{\phi}\big)^{1/\alpha} \bigg)^\alpha.
	\end{split}
	\end{equation}
\end{proof}
\begin{proof}[Proof of Theorem \ref{UniversalPointsTheorem}]
	Suppose $f$ is monotone, then by Proposition \ref{IntroduceAlpha} there is an $\alpha\ge 0$ such that $f(\sqrt{p}\ket{\phi})=p^\alpha f(\ket{\phi})$ for all $\ket{\phi}$ and $p>0$. Consider the extension of $f$ to conditionally pure states. Let $\ket{\phi}$, $i$, and $\Pi$ be given as in the statement of the theorem, then 
	\begin{equation}
		\ketbra{\phi}{\phi} \loccto (\Pi)_i \ketbra{\phi}{\phi} (\Pi)_i\otimes \ketbra{0}{0}+ (I-\Pi)_i \ketbra{\phi}{\phi} (I-\Pi)_i\otimes \ketbra{1}{1},
	\end{equation}
	so by monotonicity of the extension of $f$ we get
	\[
	f\big(\ket{\phi}\big) \ge \bigg(f\big((\Pi)_i\ket{\phi}\big)^{1/\alpha}+f\big((I-\Pi)_i\ket{\phi}\big)^{1/\alpha} \bigg)^\alpha.
	\]
	When $\ket{\phi} = \ket{0\ldots 0}+\ket{1\ldots 1}$, $\Pi=\begin{bmatrix}
	1&0\\
	0& 0
	\end{bmatrix}$ and $I-\Pi=\begin{bmatrix}
	0&0\\
	0& 1
	\end{bmatrix}$, we get by (\ref{SpectrumInequality})
	\begin{equation}
	2=f\big(\ket{\phi}\big) \ge \bigg(f\big((\Pi)_1\ket{\phi}\big)^{1/\alpha}+f\big((I-\Pi)_1\ket{\phi}\big)^{1/\alpha}\bigg)^{\alpha}
	=\bigg(f\big(\ket{0\ldots 0}\big)^{1/\alpha}+f\big(\ket{1\ldots 1}\big)^{1/\alpha}\bigg)^{\alpha}
	= 2^\alpha,
	\end{equation}
	showing that $\alpha\le 1$. This concludes the proof of the ``only if'' statement.
	\\\\
	Conversely, suppose $f$ is a homomorphism satisfying equation (\ref{SpectrumInequality}). By Lemma \ref{projectionLemma}, $f$ satisfies (\ref{SpectrumInequality2}). Consider the extension of $f$ to conditionally pure states. By Lemma \ref{SimpleLOCCprotocols} we need only check that $f$ is monotone under remembering one-step channels and monotone when tracing out the register of a state of the form $\sum_ia_i\ketbra{\phi}{\phi}\otimes\ketbra{i}{i}$. $f$ is monotone under the latter, since 
	\begin{equation}
		\bigg(\sum_i f\big(\sqrt{a_i}\ket{\phi}\big)^{1/\alpha}\bigg)^\alpha = \Big(\sum a_i\Big)^\alpha f(\ket{\phi}) = f\Big(\sum_i\sqrt{a_i}\ket{\phi}\Big).
	\end{equation} 
	For monotonicity under remembering one-step channels, we first consider pure states, that is, we need to show:
	\begin{equation}\label{InductionHypothesis}
		f\big(\ket{\phi}\big)\ge \bigg( \sum_{i=1}^r f\Big((K_i)_j\ket{\phi}\Big)^{1/\alpha}\bigg)^\alpha
	\end{equation}
	whenever $\sum_i K_i^*K_i\le I$. Assume for the sake of induction that (\ref{InductionHypothesis}) is true for $d-1$ and let $(K_i)_{i=1}^d$ be Kraus operators with $\sum_i K_i^*K_i\le I$. Set
	\begin{equation}
		A=\sqrt{\sum_{i=1}^d K_i^*K_i},
	\end{equation}
	\begin{equation}
		B=K_d,
	\end{equation}
	and
	\begin{equation}
		\tilde{K}_i=K_iA^{-1}\quad i=1,\ldots ,d-1.
	\end{equation}
	Here $A^{-1}$ denotes the Moore--Penrose pseudoinverse. Since
	\begin{equation}
		\sum_{i=1}^{d-1} \tilde K_i^*\tilde K_i = A^{-1}\sum_{i=1}^{d-1}  K_i^* K_i A^{-1} = A^{-1}A^2A^{-1}\le I,
	\end{equation}
	we may apply the induction hypothesis to the operators $(\tilde K_i)_i^{d-1}$ and the vector $A_j\ket{\phi}$ to obtain
	\begin{equation}
	\begin{split}
		f\big(\ket{\phi}\big)
		&\ge \bigg( f\big(A_j\ket{\phi}\big)^{1/\alpha}+f\big(B_j\ket{\phi}\big)^{1/\alpha}\bigg)
		\\
		&\ge \Bigg(
		\bigg(
		\bigg(
		\sum_{i=1}^{d-1}f\big((\tilde KA)_j\ket{\phi}\big)^{1/\alpha}
		\bigg)^\alpha
		\bigg)^{1/\alpha}
		+f\big(B_j\ket{\phi}\big)^{1/\alpha}
		\Bigg)^\alpha
		\\
		&=
		\bigg(
		\sum_{i=1}^{d-1}f\big((K_i)_j\ket{\phi}\big)^{1/\alpha}+ f\big((K_d)_j\ket{\phi}\big)^{1/\alpha}
		\bigg)^{\alpha}
		\\
		&=
		\bigg( \sum_{i=1}^d f\big((K_i)_j\ket{\phi}\big)^{1/\alpha}\bigg)^\alpha,
	\end{split}
	\end{equation}
	finishing the induction step.
	\\\\
	Just like in the proof of Proposition $\ref{monotoneExtension}$, this extends to conditionally pure states.
\end{proof}
Note that for $\alpha=0$ an LOCC spectral point is in fact a point in the asymptotic spectrum of tensors in the sense of \cite{CVZ2017} and \cite{strassen1988asymptotic}. For $\alpha=1$ there is just one spectral point, the norm squared:
\begin{proposition}
	Let $f:B^+\to \R$ be a monotone semiring homomorphism with $f(\sqrt{p}\ket{0\ldots0})=pf(\ket{0\ldots0})$ for $p>0$, then 
	\[
		f(\ket{\phi}) = \la \phi|\phi\ra.
	\]
\end{proposition}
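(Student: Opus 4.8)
The plan is to establish $f(\ket{\phi}) = \la\phi|\phi\ra$ by proving the two inequalities separately, using only the LOCC conversions already exhibited earlier together with the multiplicativity and monotonicity of $f$. First I would unwind the hypothesis: any semiring homomorphism $\cS_k\to\R^+$ fixes $\N$, so $f(\ket{0\ldots0}) = 1$ and $f(\ket{\GHZ_2}) = 2$ (since $\ket{\GHZ_2}$ represents $2\in\N\subset\cS_k$), and the assumption $f(\sqrt p\ket{0\ldots0}) = p\,f(\ket{0\ldots0})$ becomes $f(\sqrt p\ket{0\ldots0}) = p$; tensoring with an arbitrary $\ket{\phi}$ and using multiplicativity — the computation in the proof of Proposition~\ref{IntroduceAlpha}, now with $\alpha = 1$ — upgrades this to $f(\sqrt p\ket{\phi}) = p\,f(\ket{\phi})$ for all $\ket{\phi}$ and all $p>0$.

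For the lower bound I would use the conversion $\ket{\phi}\loccto\lVert\phi\rVert\ket{\GHZ_1} = \lVert\phi\rVert\ket{0\ldots0}$ already recorded in the proof that condition~\ref{item2} of Theorem~\ref{Stone} holds; since the two sides have equal norm this is an exact conversion, and monotonicity of $f$ gives $f(\ket{\phi})\ge f(\lVert\phi\rVert\ket{0\ldots0}) = \lVert\phi\rVert^2 = \la\phi|\phi\ra$.

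For the upper bound I would reuse the teleportation protocol from the same proof, but fed with copies of $\ket{\GHZ_2}$ instead of a single $\ket{\GHZ_d}$. Concretely, as soon as $m\ge\sum_{j=2}^{k}\lceil\log_2\dim\cH_j\rceil$, one party can prepare a local copy of the normalised $\ket{\phi}$, turn $m$ copies of the $k$-partite $\ket{\GHZ_2}$ into enough bipartite EPR pairs (the $k-2$ spectator parties measuring in the $\{\ket{+},\ket{-}\}$ basis), teleport all but the first subsystem to the other parties, and discard the unused copies of $\ket{\GHZ_2}$. Every branch of this protocol is corrected by a local unitary, so it sends the normalised $\ket{\GHZ_2}^{\otimes m}$ to the normalised $\ket{\phi}$ with probability $1$; equivalently $\ket{\GHZ_2}^{\otimes m}\loccto\frac{2^{m/2}}{\lVert\phi\rVert}\ket{\phi}$, the two sides again having equal norm. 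Applying $f$, with $f(\ket{\GHZ_2})^m = 2^m$ on the left and $\bigl(\tfrac{2^{m/2}}{\lVert\phi\rVert}\bigr)^2 f(\ket{\phi}) = \tfrac{2^m}{\lVert\phi\rVert^2}f(\ket{\phi})$ on the right, forces $f(\ket{\phi})\le\lVert\phi\rVert^2 = \la\phi|\phi\ra$. Together with the lower bound this proves the proposition.

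The only point that needs care is that the upper-bound conversion holds \emph{exactly}, with the scalar $2^{m/2}/\lVert\phi\rVert$ as stated, rather than merely asymptotically: one must check that extracting an EPR pair from a $\ket{\GHZ_2}$ via $\{\ket{+},\ket{-}\}$-measurements of the spectators, and the teleportation Pauli corrections, are all reversible by local unitaries (so the protocol is deterministic on normalised inputs), and that embedding each $\cH_j$ into a register of qubits costs only finitely many extra copies of $\ket{\GHZ_2}$, which are harmlessly traced out. Beyond that, everything is a routine use of the homomorphism and order axioms.
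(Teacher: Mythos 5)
Your argument is correct and essentially identical to the paper's: both squeeze $\ket{\phi}$ between a $\GHZ$ state above (via the teleportation conversion already used to verify condition~\ref{item2} of Theorem~\ref{Stone}) and $\ket{0\ldots 0}$ below (via deterministic local collapse), then apply monotonicity together with the scaling relation $f(\sqrt{p}\ket{\phi}) = p\,f(\ket{\phi})$ obtained by tensoring with $\sqrt{p}\ket{0\ldots 0}$. Your choice of $\ket{\GHZ_2}^{\otimes m}$ in place of a single $\frac{1}{\sqrt{d}}\ket{\GHZ_d}$, and your decision to track the norm of $\ket{\phi}$ explicitly rather than normalizing it first, are cosmetic variations on the paper's proof.
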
 
\begin{proof}
	Given $\ket{\phi}$ of norm $1$ we have $\frac{1}{\sqrt{d}}\ket{\GHZ_d}\loccto \ket{\phi} \loccto \ket{0\ldots 0}$ for sufficiently large $d$. Furthermore
	\begin{equation}
		f\Big(\frac{1}{\sqrt{d}}\ket{\GHZ_d} \Big)=\frac{1}{d}\sum_{i=0}^df(\ket{i\ldots i})
		=
		\frac{1}{d}\sum_{i=0}^df(\ket{0\ldots 0}) = f(\ket{0\ldots 0}),
	\end{equation}
	showing that $f(\ket{\phi})=f(\ket{0\ldots 0})=1$. So $f(\sqrt{p}\ket{\phi})=p$ for $p>0$.
\end{proof}
In light of Proposition \ref{IntroduceAlpha}, we get a concrete formula for the extraction rate with converse error exponent $r$:
\begin{equation}\label{equation5}
\begin{split}
E^*(r,\psi,\phi) =& \sup\Big\{\tau\in \R^+\Big\lv \forall f\in \Delta(\cS_k): f(2^{r/2}\ket{\psi})\ge f(\ket{\phi})^\tau \Big\}
\\
=&
\sup\Big\{\tau\in \R^+\Big\lv \forall f\in \Delta(\cS_k): r\alpha(f)+\log f(\ket{\psi})\ge \tau\log f(\ket{\phi}) \Big\}
\\
=&
\inf_{f\in \Delta(\cS_k)} \frac{r \alpha(f) + \log f(\ket{\psi})}{\log f(\ket{\phi})}.
\end{split}
\end{equation}
Here $\alpha(f)=\log f\big(\sqrt{2}\ket{0\ldots 0}\big)$ is the $\alpha$ from Theorem \ref{UniversalPointsTheorem}.
\section{Example: Bipartite states and $\Delta(\cS_2)$}
When $k=2$, we may, by the Schmidt decomposition, write any element in $\cS_2$ as a finite direct sum of terms of the form $\sqrt{p}\ket{0 0}$. Therefore any monotone semiring homomorphism, $f$, is entirely determined by the value of $\alpha(f)\in [0,1]$: For $\ket{\phi}=\ket{\bipartite{P}}=\sum_i \sqrt{p_i}\ket{ii}$ a monotone semiring homomorphism, $f$, must be given by
\begin{equation}
	f(\ket{\phi})= \sum_i p_i^\alpha = \Tr \left[(\Tr_2\ketbra{\phi}{\phi})^\alpha\right],
\end{equation}
where $\Tr_2$ is the partial trace of the second system.\\
The question to answer is then: For which $\alpha\in[0,1]$ does $f_\alpha:\ket{\phi}\mapsto\Tr \left[(\Tr_2\ketbra{\phi}{\phi})^\alpha \right]$ satisfy equation (\ref{SpectrumInequality}). The answer is all of them.
\begin{theorem}\label{bipartiteSpectrum}
	$\Delta(\cS_2)=\{f_\alpha|\alpha\in [0,1]\}$ where
	\[
	f_\alpha:\ket{\phi} \mapsto \Tr \left[(\Tr_2\ketbra{\phi}{\phi})^\alpha \right].
	\]
\end{theorem}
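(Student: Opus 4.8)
The containment $\Delta(\cS_2)\subseteq\{f_\alpha\mid\alpha\in[0,1]\}$ is already in hand from the paragraph preceding the theorem: by the Schmidt decomposition every class in $\cS_2$ is a finite $\oplus$-sum of scalars $\sqrt{p}\,\ket{00}$, so any monotone homomorphism is pinned down by the exponent $\alpha\in[0,1]$ supplied by Proposition \ref{IntroduceAlpha} and Theorem \ref{UniversalPointsTheorem}, and on $\ket{\bipartite{P}}$ it necessarily equals $\sum_i p_i^\alpha=\Tr[(\Tr_2\ketbra{\bipartite{P}}{\bipartite{P}})^\alpha]$. The plan is therefore to prove the reverse inclusion: for every $\alpha\in[0,1]$ the map $f_\alpha$ is a point of $\Delta(\cS_2)$. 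I would do this by checking the hypotheses of the ``if'' direction of Theorem \ref{UniversalPointsTheorem}. That $f_\alpha$ is a semiring homomorphism is immediate from the behaviour of Schmidt spectra under $\oplus$ (disjoint union of the coefficient multisets) and $\otimes$ (pairwise products), with $f_\alpha(0)=0$ and $f_\alpha(\ket{00})=1$; and $f_\alpha(\sqrt{p}\,\ket{00})=p^\alpha$, so the relevant exponent is $\alpha$ itself. The case $\alpha=0$ gives $f_0(\ket{\phi})=\rk(\Tr_2\ketbra{\phi}{\phi})$, the Schmidt rank, which is non-increasing under compression and hence satisfies the $\alpha=0$ form of the monotonicity condition (it is the spectral point of tensor theory noted after Theorem \ref{UniversalPointsTheorem}); so from here on I would assume $\alpha\in(0,1]$ and verify inequality (\ref{SpectrumInequality}).

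So fix $\ket{\phi}$, an index $i$, and an orthogonal projection $\Pi$. Since $\Tr_2\ketbra{\phi}{\phi}$ and $\Tr_1\ketbra{\phi}{\phi}$ have the same nonzero spectrum, $f_\alpha$ is independent of which factor is traced out, and the case $i=2$ reduces to $i=1$; so take $i=1$ and set $\rho=\Tr_2\ketbra{\phi}{\phi}$. Then $\Tr_2\ketbra{(\Pi)_1\phi}{(\Pi)_1\phi}=\Pi\rho\Pi$, so $f_\alpha\big((\Pi)_1\ket{\phi}\big)=\Tr[(\Pi\rho\Pi)^\alpha]$ and likewise for $I-\Pi$. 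The key move is to put $\Pi$ inside the square root: set
\[
P=\rho^{1/2}\,\Pi\,\rho^{1/2},\qquad R=\rho^{1/2}(I-\Pi)\rho^{1/2},
\]
so that $P,R\succeq 0$ and $P+R=\rho$, while $\Tr[P^\alpha]=\Tr[(\Pi\rho\Pi)^\alpha]$ and $\Tr[R^\alpha]=\Tr[((I-\Pi)\rho(I-\Pi))^\alpha]$ because $XX^{*}$ and $X^{*}X$ have equal nonzero spectra (take $X=\rho^{1/2}\Pi$). Writing $\|A\|_\alpha:=(\Tr A^\alpha)^{1/\alpha}$ for $A\succeq 0$, inequality (\ref{SpectrumInequality}) is now literally the reverse Minkowski inequality $\|P+R\|_\alpha\ge\|P\|_\alpha+\|R\|_\alpha$ for the Schatten quasi-norm of order $\alpha\in(0,1]$ on a pair of positive semidefinite operators.

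It remains to supply this reverse Minkowski inequality. One clean route: with $\lambda^{\downarrow}(\cdot)$ the decreasingly ordered eigenvalue vector, the Ky Fan/Lidskii majorization gives $\lambda^{\downarrow}(P+R)\prec\lambda^{\downarrow}(P)+\lambda^{\downarrow}(R)$; the map $x\mapsto(\sum_j x_j^\alpha)^{1/\alpha}$ on $\R_{\ge 0}^{n}$ is Schur-concave (the increasing function $t\mapsto t^{1/\alpha}$ composed with the symmetric concave function $x\mapsto\sum_j x_j^\alpha$), so it reverses that majorization; and the classical scalar reverse Minkowski inequality for $\ell^\alpha$ with $0<\alpha\le 1$ bounds $(\sum_j(\lambda_j^{\downarrow}(P)+\lambda_j^{\downarrow}(R))^\alpha)^{1/\alpha}$ below by $\|P\|_\alpha+\|R\|_\alpha$. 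Chaining the three steps yields $\|P+R\|_\alpha\ge\|P\|_\alpha+\|R\|_\alpha$, which completes the check of (\ref{SpectrumInequality}) and hence, via Theorem \ref{UniversalPointsTheorem}, the proof. I expect the only genuine obstacle to be recognizing which inequality is needed: the obvious candidates — pinching inequalities, or majorizing $\{\Pi\rho\Pi,(I-\Pi)\rho(I-\Pi)\}$ against $\rho$ — all point the wrong way, and what rescues the argument is the substitution $\Pi\mapsto\rho^{1/2}\Pi\rho^{1/2}$ that makes the two branches sum exactly to $\rho$, turning a compression estimate into reverse Minkowski, after which everything is standard.
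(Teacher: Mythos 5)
Your proposal is correct and takes essentially the same route as the paper: the paper also reduces (\ref{SpectrumInequality}) to a reverse Minkowski inequality for Schatten $\alpha$-quasi-norms of positive semidefinite operators, using the observation that $YY^*$ and $Y^*Y$ have the same nonzero spectrum (the paper works with $X^*\Pi X$ and $X^*(I-\Pi)X$, which sum to $X^*X$, where $\rho=XX^*$; your $\rho^{1/2}\Pi\rho^{1/2}$ and $\rho^{1/2}(I-\Pi)\rho^{1/2}$ are the same choice up to a partial isometry furnished by the polar decomposition of $X$). The only difference is that where the paper invokes \cite[Proposition 3.7]{BH2011} for the reverse Minkowski inequality, you supply a correct self-contained proof via Ky Fan majorization, Schur-concavity of $x\mapsto(\sum_j x_j^\alpha)^{1/\alpha}$, and the scalar reverse Minkowski inequality.
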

\begin{proof}
	When $\alpha=0$, $f_\alpha(\ket{\psi})$ is the Schmidt rank, which is monotone.
	Assume instead that $\alpha\in (0,1]$. Let $\ket{\phi}\in \C^d\otimes\C^d$ and $\Pi\in \cB(\C^d)$ be an orthogonal projection. It suffices to verify (\ref{SpectrumInequality}) for projections acting on the first system. Let $X\in\cB(\C^d)$ be such that
	\[
		\ket{\phi} = \sum_{i=1}^d X\ket{i}\otimes \ket{i}.
	\]
	Since the coefficients of $\ket{\phi}$ are the square roots of the eigenvalues of $\Tr_2\ketbra{\phi}{\phi}$ and
	\[
	\Tr_2 \ketbra{\phi}{\phi} = \sum_{i=1}^d X\ketbra{i}{i} X^*=XX^*,
	\]
	(\ref{SpectrumInequality}) is equivalent to
	\[
	[\Tr (XX^*)^\alpha]^{1/\alpha}\ge [\Tr (\Pi XX^*\Pi )^\alpha]^{1/\alpha}+[\Tr ((I-\Pi )XX^*(I-\Pi ))^\alpha]^{1/\alpha}.
	\]
	Since $YY^*$ and $Y^*Y$ always have the same eigenvalues we may formulate it instead as
	\[
	[\Tr (X^*X)^\alpha]^{1/\alpha}\ge [\Tr (X^*\Pi X)^\alpha]^{1/\alpha}+[\Tr (X^*(I-\Pi )X)^\alpha]^{1/\alpha}.
	\]
	For $\alpha=1$ this inequality holds since $X^*X=X^*IX= X^*(\Pi +(I-\Pi ))X=X^*\Pi X+X^*(I-\Pi )X$. For $\alpha\in (0,1)$ it follows from \cite[Proposition 3.7]{BH2011}.
\end{proof}
Note that the topology on $\Delta(\cS_2)$ as described in Theorem \ref{Stone} is  the Euclidean topology on $[0,1]$, such that $\Delta(\cS_2)$ can topologically be identified with the unit interval.
\\

Since $\Delta(\cS_2)$ is known we get the following formula for the asymptotic extraction rate between normalized states given converse error exponent $r$. $H_\alpha(P)=\frac{1}{1-\alpha}\log\sum_i p_i^\alpha$ is the $\alpha$-R\'enyi entropy.
\begin{equation}\label{equation9}
E^*(r,\bipartite{P},\bipartite{Q}) = \inf_{\alpha\in [0,1)} \frac{r\alpha+\log\sum p_i^\alpha}{\log\sum q_i^\alpha} = \inf_{\alpha\in [0,1)} \frac{r\frac{\alpha}{1-\alpha}+H_\alpha(P)}{H_\alpha(Q)} .
\end{equation}
When $\ket{\bipartite{Q}}=\frac{1}{\sqrt{2}}(\ket{00}+\ket{11})$ is the maximally mixed state we retrieve the result \cite[eq. (114)]{MR1960075}:
\begin{equation}\label{equation8}
E^*(r,\bipartite{P},\bipartite{Q}) = \inf_{\alpha\in [0,1)} \frac{r\alpha+\log\sum p_i^\alpha}{1-\alpha} =\inf_{\alpha\in [0,1)}  r\frac{\alpha}{1-\alpha}+H_\alpha(P).
\end{equation}
\section{Bipartite states and success probability going to 1}
So far we have considered optimal extraction rates where the success probability is allowed to go to 0. Setting $r=0$ in equation (\ref{equation9}) gives the optimal extraction rate between the two states, where the success rate is allowed to go to 0, but not exponentially fast. This is a good candidate for the optimal extraction rate, when we demand that the success probability goes to 1. Indeed, as we show in Theorem \ref{extractionRate}, setting $r=0$ in (\ref{equation9}) yields the extraction rate between the two states when demanding that success probability goes to 1. For this purpose we make use of the Nielsen's Theorem on LOCC convertibility between bipartite states. The methods of this section are not related to the asymptotic spectrum.
\begin{theorem}[Nielsen, \cite{PhysRevLett.83.436}]\label{Nielsen}
	Let $P=(p_i)_{i=1}^{d_1}$ and $Q=(q_i)_{i=1}^{d_2}$ be two probability distributions, where $p_i$ and $q_i$ are ordered non-increasingly. Then
	\begin{equation}
		\ket{\bipartite{P}}\loccto \ket{\bipartite{Q}} \iff P\preceq Q.
	\end{equation}
	Here $P\preceq Q$ means that $Q$ majorizes $P$, i.e.
	\begin{equation}\label{major}
		\sum_{i=1}^N p_i \le \sum_{i=1}^N q_i
	\end{equation} 
	for all $N$.
\end{theorem}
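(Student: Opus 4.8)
The plan is to prove Nielsen's theorem in both directions. Note first that, since $P$ and $Q$ are probability distributions, the Schmidt coefficients of $\ket{\bipartite{P}}=\sum_i\sqrt{p_i}\ket{ii}$ are exactly $(\sqrt{p_i})_i$, so its reduced state $\Tr_2\ketbra{\bipartite{P}}{\bipartite{P}}$ equals $\Diag(P)$ and has spectrum $P$; moreover an LOCC map sending the normalized state $\ket{\bipartite{P}}$ to the normalized state $\ket{\bipartite{Q}}$ is automatically trace preserving, i.e.\ deterministic. I would then reduce to one-way protocols: by the Lo--Popescu theorem every bipartite pure-state LOCC conversion is realized by Alice performing one measurement $\{A_j\}$ with $\sum_j A_j^*A_j=I$, announcing the outcome $j$, and Bob applying a unitary $U_j$, so that $(A_j\otimes U_j)\ket{\bipartite{P}}=\sqrt{t_j}\,\ket{\bipartite{Q}}$ with $t_j\ge 0$ and $\sum_j t_j=1$ (alternatively one inducts over the rounds, at the cost of more bookkeeping).

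For \emph{necessity} ($\ket{\bipartite{P}}\loccto\ket{\bipartite{Q}}\implies P\preceq Q$): conditioned on outcome $j$ the reduced state on Bob's system is some $\rho_j$ with spectrum $Q$, since the conditional global state is local-unitarily $\ket{\bipartite{Q}}$; and averaging over $j$ recovers Bob's original reduced state, so $\sum_j t_j\rho_j$ has spectrum $P$. I would then invoke the classical fact that the spectrum of a convex combination of positive operators is majorized by the convex combination of the spectra — a consequence of Ky Fan's formula $\sum_{i\le N}\lambda_i^\downarrow(\rho)=\max_{\rk\Pi=N}\Tr(\rho\Pi)$, which exhibits each partial sum as a convex function of $\rho$ — to conclude $P\preceq\sum_j t_j Q=Q$. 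For an adaptive multi-round protocol I would iterate this round by round, using that $x\preceq\sum_j t_j y_j$ together with $y_j\preceq\sum_l r_{jl}z_{jl}$ imply $x\preceq\sum_{j,l}t_j r_{jl}z_{jl}$; since the protocol ends deterministically in $\ket{\bipartite{Q}}$, every leaf has Schmidt spectrum $Q$ and the weighted average is again $Q$.

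For \emph{sufficiency} ($P\preceq Q\implies\ket{\bipartite{P}}\loccto\ket{\bipartite{Q}}$): pad $P$ and $Q$ with zeros to a common dimension. Since $Q$ majorizes $P$, the Hardy--Littlewood--P\'olya theorem provides a doubly stochastic $D$ with $P=DQ$, and Birkhoff's theorem writes $D=\sum_j t_j\Pi_j$ with $t_j\ge 0$, $\sum_j t_j=1$, and $\Pi_j$ permutation matrices; letting $\pi_j$ be the associated permutations, $p_i=\sum_j t_j q_{\pi_j(i)}$. Define Alice's Kraus operators in the Schmidt basis by $A_j\ket{i}=\sqrt{t_j q_{\pi_j(i)}/p_i}\,\ket{\pi_j(i)}$ for $i\in\supp P$ and $A_j=0$ on $\ker\Diag(P)$. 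The identity $p_i=\sum_j t_j q_{\pi_j(i)}$ gives $\sum_j A_j^*A_j\le I$ with equality on $\supp P$, so $\{A_j\}$ is a legitimate measurement, and a direct computation gives $(A_j\otimes I)\ket{\bipartite{P}}=\sqrt{t_j}\sum_i\sqrt{q_{\pi_j(i)}}\ket{\pi_j(i)}\ket{i}$, which Bob turns into $\sqrt{t_j}\,\ket{\bipartite{Q}}$ via the permutation unitary $\ket{i}\mapsto\ket{\pi_j(i)}$. This one-way protocol realizes $\ket{\bipartite{P}}\loccto\ket{\bipartite{Q}}$.

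The step I expect to be the main obstacle is the necessity direction for fully general LOCC: routing it through the Lo--Popescu reduction is clean but conceals the content, whereas a self-contained version needs a careful induction on the rounds showing that the outcome-averaged Schmidt spectrum only climbs in majorization order at each step and that the deterministic terminal condition pins each branch's spectrum to $Q$. The remaining ingredients — the Hardy--Littlewood--P\'olya and Birkhoff decompositions, Ky Fan's maximum principle, and monotonicity of majorization under mixing — are standard, and the only other fiddly point is the zero-coefficient and dimension-padding bookkeeping in the sufficiency construction. As this is Nielsen's theorem, the paper may of course also simply cite \cite{PhysRevLett.83.436} in place of reproducing the argument.
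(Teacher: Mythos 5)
The paper does not reprove this theorem: Theorem~\ref{Nielsen} is stated as an imported result with a citation to Nielsen's original paper \cite{PhysRevLett.83.436}, exactly as you anticipated in your last sentence, so there is no in-paper proof to compare against. Your reconstruction is the standard argument (Lo--Popescu reduction to a one-way protocol, Ky Fan's maximum principle for the necessity direction, and the Hardy--Littlewood--P\'olya plus Birkhoff decomposition for sufficiency), and I see no gaps in it. Two small remarks for completeness: (i) the fact that a normalized-to-normalized LOCC conversion is deterministic, which you use to pin every leaf's Schmidt spectrum to $Q$, is exactly what the paper's convention $\ket{\psi}\loccto\ket{\phi}$ encodes via $\|\phi\|^2/\|\psi\|^2=1$; and (ii) if you want to avoid invoking Lo--Popescu, the round-by-round induction you sketch is self-contained because Alice's (or Bob's) local instrument leaves the other party's reduced state invariant on average, which is the only fact the Ky Fan step needs, and the deterministic terminal condition forces spectrum $Q$ at every leaf.
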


\begin{proposition}\label{probabilityToOne}
	Let $P=(p_i)_{i=1}^{d_1}$ and $Q=(q_i)_{i=1}^{d_2}$ be two probability distributions which are ordered non-increasingly, and assume that
	\begin{equation}\label{equation10}
		\inf_{\alpha\in [0,1)} \frac{\log f_\alpha(\ket{\bipartite{P}})}{\log f_\alpha(\ket{\bipartite{Q}})} =\inf_{\alpha\in [0,1)} \frac{\log \sum_i p_i^\alpha}{\log\sum_i q_i^\alpha}=\min_{\alpha\in [0,1]} \frac{H_\alpha(P)}{H_\alpha(Q)}>1.
	\end{equation}
	Then for sufficiently large $n$
	\begin{equation}
		\sqrt{x_n}\ket{\bipartite{P}}^{\otimes n}\loccto \ket{\bipartite{Q}}^{\otimes n}
	\end{equation}
	for some sequence of $x_n\ge1$ with $x_n\to 1$. That is; one can asymptotically transform $n$ copies of $\ket{\psi}$ to $n$ copies of $\ket{\phi}$ with probability of success going to 1 as $n\to \infty$.
\end{proposition}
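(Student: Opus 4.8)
The plan is to use Theorem~\ref{Nielsen} to reduce the statement to a single inequality between the ordered Schmidt spectra of $\ket{\bipartite{P}}^{\otimes n}$ and $\ket{\bipartite{Q}}^{\otimes n}$, and then to verify that inequality by the method of types, the R\'enyi hypothesis entering through a Legendre--duality comparison. By the remark following the definition of $\loccto$, the assertion $\sqrt{x_n}\ket{\bipartite{P}}^{\otimes n}\loccto\ket{\bipartite{Q}}^{\otimes n}$ is exactly that the normalised $\ket{\bipartite{P}}^{\otimes n}$ can be LOCC-converted to the normalised $\ket{\bipartite{Q}}^{\otimes n}$ with success probability $1/x_n$, so it suffices to show the largest achievable success probability tends to $1$. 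I would realise a conversion as follows: let $\nu_n$ be the non-increasing rearrangement of the list consisting of the single entry $1-\tfrac1{x_n}$ together with $\bigl(\tfrac1{x_n}q_i^{(n)}\bigr)_i$ (a ``junk'' eigenvalue attached to a rescaled copy of $Q^{\otimes n}$); by Theorem~\ref{Nielsen}, $\ket{\bipartite{P}}^{\otimes n}$ converts deterministically to the bipartite pure state with Schmidt spectrum $\nu_n$ iff $\nu_n$ majorises $P^{\otimes n}$, and then a local rank-one projection succeeds with probability $\tfrac1{x_n}$ and produces $\ket{\bipartite{Q}}^{\otimes n}$. Writing $E_N(R):=\sum_{i\ge N}R^{\downarrow}_i$ for tail sums, the condition $\nu_n\succeq P^{\otimes n}$ is equivalent (after complementation, using $1-\tfrac1{x_n}>\tfrac1{x_n}(q^{(n)})^{\downarrow}_1$ for large $n$) to $\tfrac1{x_n}E_N(Q^{\otimes n})\le E_{N+1}(P^{\otimes n})$ for all $N\ge1$; since $(P^{\otimes n})^{\downarrow}_N\le 1/N$ and, in the regime that matters, $(P^{\otimes n})^{\downarrow}_N=o(E_N(P^{\otimes n}))$, the shift $N\mapsto N+1$ costs only a factor $1-o(1)$, so it is enough to prove $E_N(Q^{\otimes n})\le(1+o(1))E_N(P^{\otimes n})$ uniformly in $N$.

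Write $N=2^{ns}$. For $s<H(P)$ the estimate is elementary: $E_N(P^{\otimes n})=\Pr_{X\sim P^{\otimes n}}[-\tfrac1n\log P^{\otimes n}(X)>\phi_P^{-1}(s)]$ with $\phi_P^{-1}(s)<H(P)=\mathbb E[-\tfrac1n\log P^{\otimes n}(X)]$, so $E_N(P^{\otimes n})\ge1-o(1)$ whenever $s$ is bounded away from $H(P)$ from below and $E_N(P^{\otimes n})\ge\tfrac13$ for all such $s$ (if $P$ is uniform the proposition is trivial: $\ket{\bipartite{P}}^{\otimes n}$ is then maximally entangled of Schmidt rank $d_1^n>d_2^n$ and converts deterministically to $\ket{\bipartite{Q}}^{\otimes n}$). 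For $s<H(Q)$ the same holds for $Q$, so the ratio tends to $1$; for $H(Q)<s<H(P)$ a Chernoff bound $E_N(Q^{\otimes n})\le N^{-(1-\beta)}\bigl(\sum_a q_a^{\beta}\bigr)^n=2^{-n(1-\beta)(s-H_\beta(Q))}$, optimised over $\beta\in(0,1)$, makes $E_N(Q^{\otimes n})$ exponentially small while $E_N(P^{\otimes n})\gtrsim1$ (the boundary strip $s\in[H(Q),H(Q)+\epsilon]$ being handled by $E_N(Q^{\otimes n})\le E_{2^{nH(Q)}}(Q^{\otimes n})\to\tfrac12$). Thus the only remaining range is $s\in[H(P),H_0(Q))$, which is non-empty only when $H(P)<H_0(Q)$; for $s\ge H_0(Q)$ one has $N\ge d_2^n$ and $E_N(Q^{\otimes n})=0$.

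In the range $s\in[H(P),H_0(Q))$ both tails are exponentially small and I would use the sharp method-of-types estimate $E_{2^{ns}}(R^{\otimes n})=2^{-n(\phi_R^{-1}(s)-s)+o(n)}$, valid for $s\in(H(R),H_0(R))$, where $\phi_R(v)=\sup\{H(\tau):\sum_a\tau_a\log(1/r_a)\le v\}=\inf_{\lambda\ge0}[\lambda v+\log\sum_a r_a^{\lambda}]$ is the exponential count of strings of $R^{\otimes n}$-probability $\ge 2^{-nv}$, and $\phi_R(v)=v-\Lambda_R^{*}(v)$ with $\Lambda_R(\mu):=\log\sum_a r_a^{1-\mu}$ and $\Lambda_R^{*}$ its Legendre transform (so the exponent $\phi_R^{-1}(s)-s$ equals $\Lambda_R^{*}(\phi_R^{-1}(s))$). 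It therefore suffices to show $\phi_Q^{-1}(s)>\phi_P^{-1}(s)$ with a gap uniform over the compact $s$-range, equivalently $\phi_P(v)>\phi_Q(v)$ for $v$ in the corresponding range. Now the hypothesis $\min_{\alpha\in[0,1]}H_\alpha(P)/H_\alpha(Q)>1$ says exactly that $\Lambda_P(\mu)>\Lambda_Q(\mu)$ for all $\mu\in(0,1]$, while $\Lambda_P(0)=\Lambda_Q(0)=0$; and the key point is that $\phi_Q(v)\le v$ with equality only at $v=H(Q)$ (take $\lambda=1$ in the infimum), so the relevant threshold $v=\phi_Q^{-1}(s)$ satisfies $v>s\ge H(P)>H(Q)$. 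Consequently the infimum $\phi_Q(v)=\inf_{\mu\le1}[(1-\mu)v+\Lambda_Q(\mu)]$ is attained at some $\mu\in(0,1)$, and for $\phi_P(v)$ either it too is attained at some $\mu\in(0,1)$ — whence $\phi_P(v)=\inf_{\mu\in[0,1]}[(1-\mu)v+\Lambda_P(\mu)]>\inf_{\mu\in[0,1]}[(1-\mu)v+\Lambda_Q(\mu)]=\phi_Q(v)$ using the strict domination $\Lambda_P>\Lambda_Q$ on $(0,1]$ — or $v$ is large enough that $\phi_P(v)=H_0(P)>H_0(Q)\ge\phi_Q(v)$ outright. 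Either way $\phi_P(v)>\phi_Q(v)$ pointwise; compactness upgrades this to a uniform gap $\phi_Q^{-1}(s)-\phi_P^{-1}(s)\ge\eta>0$, so $E_N(Q^{\otimes n})/E_N(P^{\otimes n})\le 2^{-n\eta+o(n)}\to 0$. Combining the three ranges yields the displayed uniform inequality, hence the proposition (with $x_n=(1+o(1))/(1-o(1))\to1$).

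The hard part is the bookkeeping in the last paragraph: one must make the method-of-types estimates for $E_N$ uniform in $N$ with controlled polynomial-in-$n$ corrections, and — the genuinely delicate point — confirm that the Legendre-transform minimisers really land in $[0,1]$, the range where $\Lambda_P$ dominates $\Lambda_Q$. It is precisely here that the hypothesis is used in full strength: one needs $\Lambda_P\ge\Lambda_Q$ on \emph{all} of $[0,1]$ (equivalently the R\'enyi ratio $\ge1$ at every $\alpha\in[0,1]$, including the endpoints), together with the universal inequality $\phi_Q(v)<v$ off the mean, to push every relevant threshold $v=\phi_Q^{-1}(s)$ into the region $v>H(P)$.
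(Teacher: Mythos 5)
Your proof is correct in outline but takes a genuinely different route from the paper's. The paper \emph{truncates} $P^{\otimes n}$ at a single threshold $t_n=2^{nV^*}$ (so the top of $\tilde P_n$ is flat, which makes the head of the majorization check trivial), and then cites \cite[Props.~2.6, 3.5]{2018arXiv180805157K} for both $x_n\to1$ and the comparison of the cumulative sums of $P^{\otimes n}$ and $Q^{\otimes n}$; the large-deviations work is thus delegated to the companion reference. You instead \emph{pad} $Q^{\otimes n}$ with a junk eigenvalue $1-1/x_n$, reduce by Nielsen's theorem to the tail-sum inequality $\tfrac1{x_n}E_N(Q^{\otimes n})\le E_{N+1}(P^{\otimes n})$ uniformly in $N$, and then re-derive the needed estimates from scratch via the method of types and Legendre duality. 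Both routes are correct; yours is more self-contained, while the paper's is shorter because the split into $s$-ranges, the uniformity in $N$, and the identification of where the hypothesis $\Lambda_P\ge\Lambda_Q$ on $[0,1]$ enters — exactly the ``bookkeeping'' you flag as the hard part — is precisely the content of the cited Proposition~3.5. One small organisational note: you could shortcut the padding construction by invoking Vidal's formula $\Pr_{\max}(P\to Q)=\min_N E_N(Q)/E_N(P)$ directly, which lands you at the same tail-sum comparison without the junk-eigenvalue bookkeeping (the $N\mapsto N+1$ shift and the $M=1$ constraint on $x_n$).
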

\begin{proof}
	Without loss of generality we can assume that $P$ and $Q$ are non-uniform.
	\begin{equation}
	\ket{\bipartite{P}}^{\otimes n} = \ket{\bipartite{P^{\otimes n}}} = \sum_{I\in {[d_1]}^n} \sqrt{p_{I}} \ket{II},
	\end{equation}
	where $p_I=\prod_{j=1}^n p_{I_j}$. From (\ref{equation10}) we conclude that $H(P)>H(Q)$. Let $V^*>-H(P)$ be chosen such that \cite[Proposition 3.5]{2018arXiv180805157K} applies. Set $t_n=2^{nV^*}$ and note that
	\begin{equation}
	\ket{\bipartite{P}}^{\otimes n}\loccto  \sum_{I\in {[d_1]}^n} \min (\sqrt{p_{I}} ,\sqrt{t_n}) \ket{II}.
	\end{equation}
	Let $x_n=\Big(\sum_{I\in {[d_1]}^n} \min (p_{I},t_n)\Big)^{-1}$ such that 
	\begin{equation}
	\ket{\eta_n}=\sqrt{x_n}\sum_{I\in {[d_1]}^n} \min (\sqrt{p_{I}} ,\sqrt{t_n}) \ket{II}
	\end{equation}
	is normalized and
	\begin{equation}
	\sqrt{x_n}\ket{\bipartite{P}}^{\otimes n} \loccto \ket{\eta_n}.
	\end{equation}
	The proof is complete when it is shown that $x_n\to 1$ and $\ket{\eta_n}\loccto \ket{\bipartite{Q}}^{\otimes n}$ for large $n$. First note that
	\begin{equation}
		\sum_{I\in {[d_1]}^n} \min (p_{I},t_n) \ge  1- \sum_{\stackrel{I\in {[d_1]}^n}{p_I\ge t_n}} p_I.
	\end{equation}
	By \cite[Proposition 2.6, eq. (26)]{2018arXiv180805157K}
	\begin{equation}
		\lim_{n\to \infty}\frac{1}{n}\log\sum_{\stackrel{I\in {[d_1]}^n}{p_I\ge t_n}} p_I <0,
	\end{equation}
	which implies 
	\begin{equation}
		\lim_{n\to \infty}\sum_{\stackrel{I\in {[d_1]}^n}{p_I\ge t_n}} p_I = 0.
	\end{equation}
	So $x_n\to 1$.
	\\\\
	Let $\tilde P_n=\left(\tilde p_{n,I}\right)_{I\in [d_1]^n}=(x_n\min ({p_{I}} ,{t_n}))_{I\in [d_1]^n}$ be the probability distribution, such that $\ket{\eta_n}=\ket{\bipartite{\tilde P_n}}$. By Theorem \ref{Nielsen}, it must be shown that $\tilde P_n\preceq Q^{\otimes n}$ for large $n$.
	\\\\
	Note that $\tilde P_n\preceq P^{\otimes n}$.
	By \cite[Proposition 3.5]{2018arXiv180805157K}, we have for large $n$ and for all $N$ such that $P^{\otimes n\downarrow }(N)\le 2^{nV^*}$
	\begin{equation}\label{eq6}
		\sum_{i=1}^{N-1} \tilde P_n^\downarrow(i)\le\sum_{i=1}^{N-1} P^{\otimes n\downarrow}(i)\le \sum_{i=1}^{N-1} Q^{\otimes n\downarrow}(i).
	\end{equation}
	Let $N^*$ be the largest number such that $P^{\otimes n\downarrow }(N^*)>2^{nV^*}$. By (\ref{eq6})
	\begin{equation}
		\sum_{i=1}^{N^*} \tilde P_n^\downarrow(i) \le \sum_{i=1}^{N^*} Q^{\otimes n\downarrow}(i),
	\end{equation}
	and since $\tilde P_n^\downarrow(i)$ is constant for $i\in [N^*]$, we have
	\begin{equation}
	\sum_{i=1}^{N} \tilde P_n^\downarrow(i) \le \sum_{i=1}^{N} Q^{\otimes n\downarrow}(i)
	\end{equation}
	for all $N\le N^*$.
\end{proof}
\begin{corollary}\label{importantCorollary}
	Given $n,m\in \N$ with 
	\begin{equation}
		\frac{m}{n} > \inf_{\alpha\in [0,1)} \frac{\log f_\alpha(\ket{\psi})}{\log f_\alpha(\ket{\phi})}.
	\end{equation}
	Then
	\begin{equation}
		\sqrt{x_k}\ket{\psi}^{\otimes nk} \loccto \ket{\phi}^{\otimes mk}
	\end{equation}
	for some sequence $x_k\to 1$.
\end{corollary}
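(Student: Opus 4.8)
The plan is to reduce the corollary to Proposition \ref{probabilityToOne} by blocking. By the Schmidt decomposition we may write $\ket{\psi}^{\otimes n}=\ket{\bipartite{P}}$ and $\ket{\phi}^{\otimes m}=\ket{\bipartite{Q}}$, where $P$ and $Q$ are the $n$-fold and $m$-fold products of the Schmidt distributions of $\ket{\psi}$ and $\ket{\phi}$ respectively; then $\ket{\psi}^{\otimes nk}=\ket{\bipartite{P}}^{\otimes k}$ and $\ket{\phi}^{\otimes mk}=\ket{\bipartite{Q}}^{\otimes k}$, so the assertion is precisely the conclusion of Proposition \ref{probabilityToOne} for $P$ and $Q$. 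It remains to translate the hypothesis.

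For this I would use that each $f_\alpha$, being a semiring homomorphism, is multiplicative under tensor products, so $\log f_\alpha(\ket{\bipartite{P}})=n\log f_\alpha(\ket{\psi})$ and $\log f_\alpha(\ket{\bipartite{Q}})=m\log f_\alpha(\ket{\phi})$. If $\ket\phi$ is a product vector the statement is immediate (then $\ket\phi^{\otimes mk}$ is a product vector, which each party prepares locally after discarding $\ket\psi^{\otimes nk}$, so $x_k=1$ works); otherwise $\log f_\alpha(\ket\phi)>0$ for all $\alpha\in[0,1)$, and since $n/m$ is a positive constant it factors out of the infimum:
\[
\inf_{\alpha\in[0,1)}\frac{\log f_\alpha(\ket{\bipartite{P}})}{\log f_\alpha(\ket{\bipartite{Q}})}=\frac{n}{m}\,\inf_{\alpha\in[0,1)}\frac{\log f_\alpha(\ket{\psi})}{\log f_\alpha(\ket{\phi})}.
\]
Hence the hypothesis relating $m/n$ to $\inf_{\alpha}\log f_\alpha(\ket\psi)/\log f_\alpha(\ket\phi)$ is exactly the statement that the left-hand side above exceeds $1$, which is the hypothesis of Proposition \ref{probabilityToOne} for $P,Q$; its further rewriting as $\min_{\alpha\in[0,1]}H_\alpha(P)/H_\alpha(Q)$ follows from $\log f_\alpha=(1-\alpha)H_\alpha$ together with continuity of $H_\alpha$ at $\alpha=1$.

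Proposition \ref{probabilityToOne} then produces, for all sufficiently large $k$, numbers $x_k\ge1$ with $x_k\to1$ and $\sqrt{x_k}\ket{\bipartite{P}}^{\otimes k}\loccto\ket{\bipartite{Q}}^{\otimes k}$, i.e.\ $\sqrt{x_k}\ket{\psi}^{\otimes nk}\loccto\ket{\phi}^{\otimes mk}$; for the finitely many remaining small $k$ I would take any finite $x_k\ge1$ for which the conversion holds, one existing because the $\alpha=0$ term of the infimum keeps the Schmidt rank of $\ket\psi^{\otimes nk}$ at least that of $\ket\phi^{\otimes mk}$, so that $\ket\psi^{\otimes nk}$ filters onto a maximally entangled state of that rank with positive probability and Nielsen's theorem (Theorem \ref{Nielsen}) converts that state to $\ket\phi^{\otimes mk}$ deterministically. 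The substantive work—the majorization argument with Nielsen's theorem and the typical-subspace truncation from \cite{2018arXiv180805157K}—is already carried out inside Proposition \ref{probabilityToOne}, so the only delicate points in the present corollary are this small-$k$ bookkeeping and the degenerate product-vector case; I do not anticipate a genuinely new obstacle.
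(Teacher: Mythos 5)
Your overall strategy is exactly the intended one: the paper gives no separate proof of this corollary, and it is meant to follow from Proposition \ref{probabilityToOne} applied to $\ket{\psi}^{\otimes n}$ and $\ket{\phi}^{\otimes m}$, using multiplicativity of $f_\alpha$ to pull out the factor $n/m$. Your handling of the degenerate product-state case and of the finitely many small $k$ (where Proposition \ref{probabilityToOne} only speaks of ``sufficiently large'' block number) is a sensible and correct supplement that the paper glosses over.

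There is, however, one point you pass over too quickly, and it is precisely where the content lies. You assert that ``the hypothesis \ldots is exactly the statement that the left-hand side above exceeds $1$.'' Taken literally, the printed hypothesis is $\frac{m}{n} > \inf_\alpha \frac{\log f_\alpha(\ket\psi)}{\log f_\alpha(\ket\phi)}$, and combined with your (correct) identity $\inf_\alpha \frac{\log f_\alpha(\ket{\bipartite{P}})}{\log f_\alpha(\ket{\bipartite{Q}})} = \frac{n}{m}\inf_\alpha \frac{\log f_\alpha(\ket\psi)}{\log f_\alpha(\ket\phi)}$ this says the left-hand side is \emph{less} than $1$ --- the opposite of what Proposition \ref{probabilityToOne} requires. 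What you have actually proved is the corollary with the hypothesis $\frac{m}{n} < \inf_\alpha \frac{\log f_\alpha(\ket\psi)}{\log f_\alpha(\ket\phi)}$. That corrected version is certainly what is intended: the corollary is used immediately afterwards to furnish a \emph{lower} bound $E(\bipartite{P},\bipartite{Q}) \ge \min_\alpha H_\alpha(P)/H_\alpha(Q)$, which only makes sense if every rate $m/n$ strictly \emph{below} the infimum is achievable (with the printed direction one would get $E=\infty$, contradicting Theorem \ref{extractionRate}; and already $\psi=\phi$, $m/n=2$ gives a counterexample to the literal statement). So the printed inequality is a typo, and your proof is the right proof of the right statement --- but you should say so explicitly rather than silently flipping the inequality in the translation step. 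Note also that your small-$k$ Schmidt-rank argument likewise relies on the corrected direction (it needs $\frac{m}{n} \le \frac{\log f_0(\ket\psi)}{\log f_0(\ket\phi)}$, which follows from $\frac{m}{n}$ being below the infimum, not above it).
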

Let $E(\psi,\phi)$ denote the optimal rate at which one can extract $\ket{\phi}$ from $\ket{\psi}$ with chance of success going to $1$ asymptotically.
Combining Corollary \ref{importantCorollary} and equation (\ref{equation9}) with $r=0$, yields respectively a lower and upper bound on $E(\psi,\phi)$, which can be summed up as:
\begin{theorem}\label{extractionRate}
	Given probability distributions $P$ and $Q$
	\begin{equation}
		E(\bipartite{P},\bipartite{Q}) = \min_{\alpha\in [0,1]}\frac{H_\alpha(P)}{H_\alpha(Q)}.
	\end{equation}
\end{theorem}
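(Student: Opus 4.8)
The plan is to sandwich $E(\bipartite{P},\bipartite{Q})$ between a lower and an upper bound that both follow from the machinery already developed, and then to reconcile the ``$\inf$ over $[0,1)$'' appearing there with the ``$\min$ over $[0,1]$'' in the statement. For the upper bound I would first observe that an asymptotic conversion whose success probability tends to $1$ is, a fortiori, one whose success probability is $2^{-o(n)}$; hence $E(\bipartite{P},\bipartite{Q})\le E^*(0,\bipartite{P},\bipartite{Q})$. Equation (\ref{equation9}) with $r=0$ then evaluates the right-hand side, giving $E(\bipartite{P},\bipartite{Q})\le\inf_{\alpha\in[0,1)}\frac{\log\sum_i p_i^\alpha}{\log\sum_i q_i^\alpha}=\inf_{\alpha\in[0,1)}\frac{H_\alpha(P)}{H_\alpha(Q)}$, the factors $1-\alpha$ cancelling since $\alpha<1$ (and both logarithms being positive because $P,Q$ are assumed non-degenerate).

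For the matching lower bound I would invoke Corollary \ref{importantCorollary}: for every pair $n,m\in\N$ with $m/n$ strictly above $\inf_{\alpha\in[0,1)}\frac{\log f_\alpha(\bipartite{P})}{\log f_\alpha(\bipartite{Q})}$ one has $\sqrt{x_k}\,\ket{\bipartite{P}}^{\otimes nk}\loccto\ket{\bipartite{Q}}^{\otimes mk}$ for some $x_k\to 1$; rounding down the number of copies for $N$ not a multiple of $n$ and discarding the superfluous copies, this means the extraction rate $m/n$ is achievable with success probability tending to $1$, so $E(\bipartite{P},\bipartite{Q})\ge m/n$. Letting $m/n$ decrease to the infimum yields $E(\bipartite{P},\bipartite{Q})\ge\inf_{\alpha\in[0,1)}\frac{H_\alpha(P)}{H_\alpha(Q)}$, and combined with the previous step this gives $E(\bipartite{P},\bipartite{Q})=\inf_{\alpha\in[0,1)}\frac{H_\alpha(P)}{H_\alpha(Q)}$.

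It then remains to upgrade this infimum over $[0,1)$ to a minimum over the closed interval $[0,1]$. Here I would use the standard facts that $\alpha\mapsto H_\alpha(R)$ is continuous and extends continuously to $\alpha=1$ with value the Shannon entropy, and that $H_\alpha(Q)>0$ for every $\alpha\in[0,1]$ when $Q$ is not a point mass. Hence $\alpha\mapsto H_\alpha(P)/H_\alpha(Q)$ is continuous on the compact interval $[0,1]$, so it attains its minimum there, and since $[0,1)$ is dense in $[0,1]$ this minimum coincides with $\inf_{\alpha\in[0,1)}H_\alpha(P)/H_\alpha(Q)$; this yields the claimed formula. The genuinely degenerate cases, in which $P$ or $Q$ is deterministic and so $\ket{\bipartite{P}}$ or $\ket{\bipartite{Q}}$ is a product state, I would dispose of directly, reading the formula with the conventions $c/0=+\infty$ and $0/c=0$.

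The main point is that essentially all the substance is already in place: Corollary \ref{importantCorollary} (through Proposition \ref{probabilityToOne} and Nielsen's theorem) supplies achievability and equation (\ref{equation9}) supplies the converse, so this proof is mostly bookkeeping. I do not expect a real obstacle; the only place calling for (routine) care is the behaviour at the endpoint $\alpha=1$ — checking that the infimum over $[0,1)$ does not strictly decrease as $\alpha\to 1^-$ and is actually realized on the closed interval — together with a clean treatment of the product-state edge cases so that the formula holds with no side hypotheses.
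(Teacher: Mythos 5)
Your proposal follows the paper's own route, which dispatches the theorem in a single sentence by combining Corollary~\ref{importantCorollary} for achievability with equation~(\ref{equation9}) at $r=0$ for the converse; the bookkeeping you add (continuity of $\alpha\mapsto H_\alpha(P)/H_\alpha(Q)$ up to $\alpha=1$, density of $[0,1)$ in $[0,1]$, and the product-state edge cases) is exactly what the paper leaves implicit. One caveat you should catch rather than inherit: Corollary~\ref{importantCorollary} as printed has the inequality reversed. Applying Proposition~\ref{probabilityToOne} to the distributions $P^{\otimes n}$ and $Q^{\otimes m}$ yields the conversion $\sqrt{x_k}\ket{\bipartite{P}}^{\otimes nk}\loccto\ket{\bipartite{Q}}^{\otimes mk}$ under the hypothesis $\frac{m}{n} < \inf_{\alpha\in[0,1)}\frac{\log f_\alpha(\bipartite{P})}{\log f_\alpha(\bipartite{Q})}$, not $>$; taken literally with ``$>$'' the corollary would give $E\ge m/n$ for arbitrarily large $m/n$, which is absurd. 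Correspondingly, your limiting step should read ``let $m/n$ increase to the infimum from below'', not ``decrease to the infimum.''
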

\paragraph{Acknowledgements.}
We acknowledge financial support from the European Research Council (ERC Grant Agreement no. 337603) and VILLUM FONDEN via the QMATH Centre of Excellence (Grant no. 10059). PV acknowledges support from the Hungarian Research Grant NKFI K124152.
\bibliographystyle{ieeetr}
\bibliography{all}
\end{document}